\newtheorem{theorem}{Theorem}[section]
\begin{document}

\title{
{An MGF-based Unified Framework to Determine the Joint Statistics of Partial Sums of Ordered Random Variables}}

\author{Sung~Sik~Nam,~\IEEEmembership{Member,~IEEE,}
	Mohamed-Slim~Alouini,~\IEEEmembership{Fellow~Member,~IEEE,}
	and Hong-Chuan~Yang,~\IEEEmembership{Senior~Member,~IEEE.}
\thanks{$^{\star}$
This work was supported in part by Qatar Telecom (Qtel). This is an extended version of a paper which was presented in Proc. of IEEE International Conference on Wireless Communications and Signal Processing (WCSP 2009), Nanjing, China, November 2009.
S.~S.~Nam was with Department of Electrical and Computer Engineering, Texas A\&M University at College Station, Texas, USA. He is now with Department of Electronic Engineering in Hanyang University, Seoul, Korea. M.-S.~Alouini was with the Electrical and Computer Engineering Program at Texas A\&M University at Qatar, Doha, Qatar. He is now with Electrical Engineering Program, KAUST, Thuwal, Saudi Arabia. H.~-C.~Yang is with Department of Electrical and Computer Engineering, University of Victoria, BC V8W 3P6, Canada.  They can be reached by E-mail at
\texttt{$<$ssnam11@tamu.edu, slim.alouini@kaust.edu.sa, hyang@ece.uvic.ca$>$}.}}



\markboth{S.S. Nam \MakeLowercase{\textit{et al.}}: An Unified Framework ... on Ordered Statistics}{Shell \MakeLowercase{\textit{et al.}}: An Unified Framework ... on Ordered Statistics}

\maketitle

\begin{abstract}
Order statistics find applications in various areas of communications and signal
processing. In this paper, we introduce an unified analytical framework to determine
the joint statistics of partial sums of ordered random variables (RVs). With the proposed approach, we can systematically derive the joint statistics of any partial sums of ordered statistics, in terms of the moment generating function (MGF) and the probability density function (PDF). Our MGF-based approach applies not only when all the $K$ ordered RVs are involved but also when only the $K_s$ $(K_s <K)$ best RVs are considered. In addition, we present the closed-form expressions for the exponential RV special case. These results apply to the performance analysis of various wireless communication systems over fading channels. 
\end{abstract}

\begin{keywords}
Joint PDF, Moment generating function (MGF), Order statistics, Probability density function (PDF), Rayleigh fading.
\end{keywords}


\section{Introduction}
The subject of order statistics deals with the properties and distributions of the ordered random variables (RVs) and their functions. It has found applications in  many areas of statistical theory and practice~\cite{kn:Order_Statistics}, with examples
including life-testing, quality control, signal and image processing~\cite{kn:Handbook_of_Statistics,kn:Goldsmith_book}. Recently, order statistics makes a growing number of appearance in the analysis and design of wireless communication systems (see for example~\cite{kn:simon_book, kn:MS_GSC, kn:OT_MRC_j, kn:Choi_finger, kn:Hong_Sumrate, kn:alouini_wi_j3, kn:ma00, kn:win2001, kn:annamalai2002, kn:Mallik_05, kn:bouida2008}). 
For example, diversity techniques have been used for over the past fifty years to mitigate the effects of fading on wireless communication systems. These techniques  improve the performance of wireless systems over fading channels by generating and combining multiple replicas of the same information bearing signal at the receiver. The analysis of low-complexity selection combining schemes, which select the best replica, requires some basic results of order statistics, i.e. the distribution functions of the largest one among several random variables.

More recently, the design and analysis of adaptive diversity combining techniques and multiuser scheduling strategies call for some further results on order statistics \cite{kn:MS_GSC, kn:OT_MRC_j}. In particular, the joint statistics of partial sums of ordered RVs are often necessary for the accurate characterization of system performance~\cite{kn:Choi_finger,kn:bouida2008}. The major difficulty in obtaining the statistics of partial sums of ordered RVs resides in the fact that even if the original unordered RVs are independently distributed, their ordered versions are necessarily dependent due to the inequality relations among them. Recently, the co-author has applied a successive conditioning approach to convert dependent ordered random variables to independent unordered ones \cite{kn:MS_GSC, kn:OT_MRC_j}. That approach, however, requires some case-specific manipulations, which may not always be generalizable. 

In this paper, we present an unified analytical framework to determine the joint statistics of partial sums of ordered RVs using a moment generating functions (MGF) based approach. More specifically, we extend the result in~\cite{kn:Integral_Solution_Nuttall,kn:joint_PDF_Nuttall,kn:Multi_PDF_Order_Selection}, which only derive the joint MGF of the selected individual order statistics and the sum of the remaining ones, and systematically solve for the joint statistics of arbitrary partial sums of ordered RVs. The main advantage of the proposed MGF-based unified framework is that it applies not only to the cases when all the $K$ ordered RVs are considered but also to those cases when only the $K_s$ $(K_s <K)$ best RVs are involved. After considering several illustrative examples, we focus on the exponential RV special case and derive the closed-form expressions of the joint statistics. These statistical results can apply to the performance analysis of various wireless communication systems over generalized fading channels.

{The remainder of this paper is organized as follows. In section II, we summarize the main idea behind the proposed unified analytical framework, including the general idea and some special considerations. We then introduce some common functions and useful relations in section III, which will help make the results in later sections more compact. In section IV and V, we present some selected examples on the derivation of joint PDF based on our proposed approach. Following this, we show in section VI some closed form expressions for the selected examples presented in previous sections under i.i.d. Rayleigh fading conditions. Finally, we discuss some useful applications of these results in section VII.}

\section{The Main Idea}
Let $\infty \ge \gamma_{1:K} \ge \gamma_{2:K} \ge \gamma_{3:K} \cdots \ge \gamma_{K:K} \ge 0$ be the order statistics obtained by arranging $K$ nonnegative i.i.d. RVs, $\left\{ {\gamma _i } \right\}_{i = 1}^K$, in decreasing order of magnitude. The objective is to derive the joint
PDF of their partial sums involving either all $K$ or the first $K_s$($K_s < K$) ordered RVs, e.g. the
joint PDF of $\gamma _{m:K}$ and $\sum\limits_{\scriptstyle n = 1 \hfill \atop \scriptstyle n \ne m \hfill}^K {\gamma _{n:K} } $ or the joint PDF of $\sum\limits_{n = 1}^m {\gamma _{n:K} }$ and $\sum\limits_{n = m + 1}^{K_s } {\gamma _{n:K} }$.

\subsection{General steps}
The proposed analytical framework adopts a general two-step approach: i) obtain the analytical expressions of the joint MGF of partial sums (not necessarily the partial sums of interest as will be seen later), ii) apply inverse Laplace transform to derive the joint PDF of partial sums (additional integration may be required to obtain the desired joint PDF). To facilitate the inverse Laplace transform calculation, the joint MGF from step i) should be made as compact as possible. An observation made in~\cite{kn:Integral_Solution_Nuttall,kn:joint_PDF_Nuttall,kn:Multi_PDF_Order_Selection} involving the interchange of multiple integrals of ordered RVs becomes useful in the following analysis.
Suppose for example that we need to evaluate a multiple integral over the range $\gamma_a\ge\gamma_1\ge\gamma_2\ge\gamma_3\ge\gamma_4\ge\gamma_b$. More specifically, let
\begin{equation} \label{eq:multi_integral}
\small I \! = \! \int_{\gamma _b }^{\gamma _a } {d\gamma _1 \int_{\gamma _b }^{\gamma _1 } {d\gamma _2 \int_{\gamma _b }^{\gamma _2 } {d\gamma _3 \int_{\gamma _b }^{\gamma _3 } {d\gamma _4 \;\; p\left( {\gamma _1 ,\gamma _2 ,\gamma _3 ,\gamma _4 } \right)} } } }.
\end{equation}
It can be shown by interchanging the order of integration, while ensuring each pair of limits is chosen to be as tight as possible, the multiple integral in (\ref{eq:multi_integral}) can be rewritten into the following equivalent representations, \small
\begin{eqnarray} \label{eq:Integral_solution}
I \!\!\!\! &=& \!\!\!\! \int_{\gamma _b }^{\gamma _a } {d\gamma _4 \int_{\gamma _4 }^{\gamma _a } {d\gamma _3 \int_{\gamma _3 }^{\gamma _a } {d\gamma _2 \int_{\gamma _2 }^{\gamma _a } {d\gamma _1 \;\; p\left( {\gamma _1 ,\gamma _2 ,\gamma _3 ,\gamma _4 } \right)} } } } \nonumber
\\
 \!\!\!\! &=& \!\!\!\! \int_{\gamma _b }^{\gamma _a } {d\gamma _2 \int_{\gamma _b }^{\gamma _2 } {d\gamma _3 \int_{\gamma _b }^{\gamma _3 } {d\gamma _4 \int_{\gamma _2 }^{\gamma _a } {d\gamma _1 \;\; p\left( {\gamma _1 ,\gamma _2 ,\gamma _3 ,\gamma _4 } \right)} } } }  \nonumber
\\
 \!\!\!\! &=& \!\!\!\! \int_{\gamma _b }^{\gamma _a } {d\gamma _3 \int_{\gamma _3 }^{\gamma _a } {d\gamma _1 \int_{\gamma _b }^{\gamma _3 } {d\gamma _4 \int_{\gamma _3 }^{\gamma _1 } {d\gamma _2 \;\; p\left( {\gamma _1 ,\gamma _2 ,\gamma _3 ,\gamma _4 } \right)} } } }  \nonumber
\\
 \!\!\!\! &=& \!\!\!\! \int_{\gamma _b }^{\gamma _a } {d\gamma _1 \int_{\gamma _b }^{\gamma _1 } {d\gamma _4 \int_{\gamma _4 }^{\gamma _1 } {d\gamma _3 \int_{\gamma _3 }^{\gamma _1 } {d\gamma _2 \;\; p\left( {\gamma _1 ,\gamma _2 ,\gamma _3 ,\gamma _4 } \right)} } } }  \nonumber
\\
 \!\!\!\! &=& \!\!\!\! \int_{\gamma _b }^{\gamma _a } {d\gamma _4 \int_{\gamma _4 }^{\gamma _a } {d\gamma _1 \int_{\gamma _4 }^{\gamma _1 } {d\gamma _3 \int_{\gamma _3 }^{\gamma _1 } {d\gamma _2 \;\; p\left( {\gamma _1 ,\gamma _2 ,\gamma _3 ,\gamma _4 } \right)} } } }.
\end{eqnarray} \normalsize
The general rule is that the integration limits should be selected as tight as possible using the remaining variables. For example, in the first equation of (\ref{eq:Integral_solution}), the variables are integrated in the order of $\gamma_1$, $\gamma_2$, $\gamma_3$, and $\gamma_4$. Based on the given inequality condition $\gamma_a\ge\gamma_1\ge\gamma_2\ge\gamma_3\ge\gamma_4\ge\gamma_b$, the integration limit of $\gamma_1$ should be from $\gamma_2$ to $\gamma_a$ because $\gamma_2$ is the tightest among the remaining RVs. Similarly,  the integration limit $\gamma _3$ is from $\gamma_4$ to $\gamma_a$ because $\gamma_1$ and $\gamma_2$ were already integrated out.

After obtaining the joint MGF in a compact form, we can derive joint PDF of selected partial sum through inverse Laplace transform. For most cases of our interest, the joint MGF involves basic functions, for which the inverse Laplace transform can be calculated analytically. In the worst case, we may rely on the Bromwich contour integral. In most of the case, the result involves a single one-dimensional contour integration, which can be easily and accurately evaluated numerically with the help of integral tables~\cite{kn:Mathematical_handbook, kn:gradshteyn_6} or using standard mathematical packages such as Mathematica and Matlab.

\subsection{Special cases}

The general steps can be directly applied when all $K$ ordered RVs are considered and the RVs in the partial sums are continuous. When these conditions do not hold, we need to apply some extra steps in the analysis in order to obtain a valid joint MGF. Specifically, 
when only the best $K_s$ ($K_s < K$) ordered RVs are involved in the partial sums, we should consider the $K_s$th order statistics $\gamma_{K_s:K}$ separately. Without this separation, we cannot find the valid integration limit in calculating the joint MGF. As an illustration, the example in~Fig.~\ref{Example_1} considers 3-dimensional joint PDF of the partial sums of three group of RVs which are $\{\gamma_{1:K},\gamma_{2:K},\gamma_{3:K}\}$, $\{\gamma_{4:K},\gamma_{5:K},\gamma_{6:K}\}$, and $\{\gamma_{7:K},\gamma_{8:K}\}$, with $K=10$ and $K_s=8$. Following the proposed approach, we will derive the 4-dimensional joint MGF in step i) by considering $\gamma_{8:K}$ separately, i.e. the joint MGF of the following four groups of RVs $\{\gamma_{1:K},\gamma_{2:K},\gamma_{3:K}\}$, $\{\gamma_{4:K},\gamma_{5:K},\gamma_{6:K}\}$, $\{\gamma_{7:K}\}$, and $\{\gamma_{8:K}\}$. After obtaining the corresponding 4-dimensional joint MGF, we need to perform another finite integration to obtain the desired 3-dimensional joint PDF.

When the RVs involved in one partial sum is not continuous, i.e. separated by the other RVs, we need to divide these RVs into smaller sums. The example in ~Fig.~\ref{Example_2} illustrate this process. If we consider 3-dimensional joint PDF of $\{\gamma_{1:K}$, $\gamma_{2:K}$, $\gamma_{5:K}$, $\gamma_{6:K}\}$, $\{\gamma_{3:K},\gamma_{4:K}\}$, and $\{\gamma_{7:K},\gamma_{8:K}\}$, then there are three partial sum of RVs $\{\gamma_{1:K}$, $\gamma_{2:K}$, $\gamma_{5:K}$, $\gamma_{6:K}\}$, $\{\gamma_{3:K},\gamma_{4:K}\}$, and $\{\gamma_{7:K},\gamma_{8:K}\}$. Note that the first group is split by the second group of RVs (as such discontinuous). More specifically, the second group, $\{\gamma_{3:K},\gamma_{4:K}\}$, split the original group, $\{\gamma_{1:K}$, $\gamma_{2:K}$, $\gamma_{5:K}$, $\gamma_{6:K}\}$, into two groups as $\{\gamma_{1:K}, \gamma_{2:K}\}$, and $\{\gamma_{5:K}, \gamma_{6:K}\}$. Therefore, we also consider this split group $\{\gamma_{1:K}, \gamma_{2:K}\}$ and $\{\gamma_{5:K}, \gamma_{6:K}\}$ as two smaller groups. As a result, we will derive 5-dimensional joint MGF of  $\{\gamma_{1:K},\gamma_{2:K}\}$, $\{\gamma_{3:K},\gamma_{4:K}\}$, $\{\gamma_{5:K},\gamma_{6:K}\}$, $\{\gamma_{7:K}\}$, and $\{\gamma_{8:K}\}$ in step i). Similarly to the first example, after the joint PDF of the new substituted partial sums are derived with inverse Laplace transform in step ii), we can transform it to a lower dimensional desired joint PDF with finite integration. 

The proposed approach is summarized in the flowchart given in Fig.~\ref{flowchart}, where we consider three different case separately. In the following sections, we present several examples to illustrate the proposed analytical framework. Our focus is on how to obtain a compact expression of the joint MGFs, which can be greatly simplified with the application of the following function and relations.

\section{Common Functions and Useful Relations}

In this section, we introduce some common functions and their properties. These results will be used to simplify the derivation of  joint MGFs in later sections.

\subsection{Common Functions}

\begin{enumerate}
\item[i)]A mixture of a CDF and an MGF $c\left( {\gamma ,\lambda } \right)$:
\begin{equation} \label{eq:CDF_MGF}
\small c\left( {\gamma ,\lambda } \right) = \int_0^\gamma  {dx\;p\left( x \right)\exp \left( {\lambda x} \right)},
\end{equation}
where $p\left( x \right)$ denotes the PDF of the RV of interest.
Note that $c\left( {\gamma , 0 } \right)=c\left( {\gamma } \right)$ is the CDF and $c\left( {\infty ,\lambda } \right)$ leads to the MGF. 
Here, the variable $\gamma$ is real, while $\lambda$ can be complex.

\item[ii)]A mixture of an exceedance distribution function (EDF) and an MGF, $e\left( {\gamma ,\lambda } \right)$:
\begin{equation} \label{eq:EDF_MGF}
\small e\left( {\gamma ,\lambda } \right) = \int_\gamma ^\infty  {dx\;p\left( x \right)\exp \left( {\lambda x} \right)}.
\end{equation}
Note that $e\left( {\gamma ,0} \right)=e\left( {\gamma} \right)$ is the EDF while $e\left( {0,\lambda } \right)$ gives the MGF.

\item[iii)]An Interval MGF $\mu \left( {\gamma _a ,\gamma _b ,\lambda } \right)$:
\begin{equation} \label{eq:IntervalMGF}
\small \mu \left( {\gamma _a ,\gamma _b ,\lambda } \right) = \int_{\gamma _a }^{\gamma _b } {dx\;p\left( x \right)\exp \left( {\lambda x} \right)}.
\end{equation}
Note that $\mu\left( {0, \infty, \lambda} \right)$ gives the MGF.
\end{enumerate}
Note that the functions defined in (\ref{eq:CDF_MGF}), (\ref{eq:EDF_MGF}) and (\ref{eq:IntervalMGF}) are related as follows
\small \begin{eqnarray}
 c\left( {\gamma ,\lambda } \right) &=& e\left( {0,\lambda } \right) - e\left( {\gamma ,\lambda } \right) \\ \nonumber
  &=& c\left( {\infty ,\lambda } \right) - e\left( {\gamma ,\lambda } \right) \\ 
 e\left( {\gamma ,\lambda } \right) &=& c\left( {\infty ,\lambda } \right) - c\left( {\gamma ,\lambda } \right) \\ \nonumber
  &=& e\left( {0,\lambda } \right) - c\left( {\gamma ,\lambda } \right) \\ 
 \mu \left( {\gamma _a ,\gamma _b ,\lambda } \right) &=& c\left( {\gamma _b ,\lambda } \right) - c\left( {\gamma _a ,\lambda } \right) \\ \nonumber
  &=& e\left( {\gamma _a ,\lambda } \right) - e\left( {\gamma _b ,\lambda } \right). \label{eq:Interrelation_of_Interval_MGF}
\end{eqnarray} \normalsize
\subsection{Simplifying Relationship}
\begin{enumerate}
\item[i)] Integral $I_m$:
\\
Based on the derivation given Appendix~\ref{AP:A}, the integral $I_m$ defined as:
\small
\begin{eqnarray}
 I_m \!\!\!\! &=& \!\!\!\! \int_0^{\gamma _{m - 1:K} } \!\!\!\! {d\gamma _{m:K} \;p\left( {\gamma _{m:K} } \right)\exp \left( {\lambda \gamma _{m:K} } \right)\int_0^{\gamma _{m:K} } \!\!\!\! {d\gamma _{m + 1:K} p\left( {\gamma _{m + 1:K} } \right)\exp \left( {\lambda \gamma _{m + 1:K} } \right)} }  \nonumber \\ 
\!\!\!\! && \!\!\!\! \times \int_0^{\gamma _{m + 1:K} }\!\!\!\! {d\gamma _{m + 2:K} p\left( {\gamma _{m + 2:K} } \right)\exp \left( {\lambda \gamma _{m + 2:K} } \right)}  \cdots \int_0^{\gamma _{K - 1:K} } \!\!\!\! {d\gamma _{K:K} p\left( {\gamma _{K:K} } \right)\exp \left( {\lambda \gamma _{K:K} } \right)}, 
\end{eqnarray} \normalsize
can be expressed in terms of the function $c\left(\gamma, \lambda\right)$ as
\begin{equation}  \label{eq:CDF_MGF_multiple}\small
I_m=\frac{1}{{\left( {K - m + 1} \right)!}}\left[ {c\left( {\gamma _{m - 1:K} ,\lambda } \right)} \right]^{\left( {K - m + 1} \right)}.
\end{equation}
\item[ii)]Integral $I'_m$:
\\
Following the similar derivation as given in Appendix~\ref{AP:A}, the integral $I'_m$ defined as
\small\begin{eqnarray} 
 I'_m \!\!\!\! &=& \!\!\!\! \int_{\gamma _{m + 1:K} }^\infty \!\!\!\! {d\gamma _{m:K} \;p\left( {\gamma _{m:K} } \right)\exp \left( {\lambda \gamma _{m:K} } \right)\int_{\gamma _{m:K} }^\infty \!\!\!\! {d\gamma _{m - 1:K} p\left( {\gamma _{m - 1:K} } \right)\exp \left( {\lambda \gamma _{m - 1:K} } \right)} } \nonumber \\ 
\!\!\!\! && \!\!\!\! \times \int_{\gamma _{m - 1:K} }^\infty \!\!\!\! {d\gamma _{m - 2:K} p\left( {\gamma _{m - 2:K} } \right)\exp \left( {\lambda \gamma _{m - 2:K} } \right)}  \cdots \int_{\gamma _{2:K} }^\infty \!\!\!\! {d\gamma _{1:K} p\left( {\gamma _{1:K} } \right)\exp \left( {\lambda \gamma _{1:K} } \right)}, 
\end{eqnarray} \normalsize
can be expressed in terms of the function $e\left(\gamma, \lambda\right)$ as
\begin{equation}\small\label{eq:EDF_MGF_multiple}
I'_m= \frac{1}{{m!}}\left[ {e\left( {\gamma _{m + 1:K} ,\lambda } \right)} \right]^m.
\end{equation}

\item[iii)] Integral $I''_{a,b}$:
\\
Based on the derivation given in Appendix~\ref{AP:C}, the integral $I''_{a,b}$ defined as
\small \begin{eqnarray} 
 I''_{a,b} \!\!\!\! &=& \!\!\!\! \int_{\gamma _{b:K} }^{\gamma _{a:K} }\!\!\!\! {d\gamma _{b - 1:K} \;p\left( {\gamma _{b - 1:K} } \right)\exp \left( {\lambda \gamma _{b - 1:K} } \right)\int_{\gamma _{b - 1:K} }^{\gamma _{a:K} } \!\!\!\! {d\gamma _{b - 2:K} p\left( {\gamma _{b - 2:K} } \right)\exp \left( {\lambda \gamma _{b - 2:K} } \right)} }  \nonumber \\
\!\!\!\!  && \!\!\!\! \times \int_{\gamma _{b - 2:K} }^{\gamma _{a:K} } \!\!\!\! {d\gamma _{b - 3:K} p\left( {\gamma _{b - 3:K} } \right)\exp \left( {\lambda \gamma _{b - 3:K} } \right)}  \cdots \int_{\gamma _{a + 2:K} }^{\gamma _{a:K} } \!\!\!\! {d\gamma _{a+1:K} p\left( {\gamma _{a+1:K} } \right)\exp \left( {\lambda \gamma _{a+1:K} } \right)},
\end{eqnarray} \normalsize
can be expressed in terms of the function $\mu\left(\cdot,\cdot\right)$ as
\begin{equation} \small\label{eq:IntervalMGF_multiple}
I''_{a,b}= \frac{1}{{\left( {b - a - 1} \right)!}}\left[ {\mu \left( {\gamma _{b:K} ,\gamma _{a:K} ,\lambda } \right)} \right]^{\left( {b - a - 1} \right)} \quad\quad\text{for }b>a.
\end{equation}
\end{enumerate}


\section{Sample Cases when All $K$ Ordered RVs are Considered}
Assume the original RVs $\left\{ \gamma_i \right\}$ are i.i.d. with a common arbitrary PDF $p \left( \gamma \right)$, the $K$-dimensional joint PDF of $\left\{ {\gamma _{i:K} } \right\}_{i = 1}^K$ is simply given by~\cite{kn:Order_Statistics}
\begin{equation} \label{eq:m-joint_PDF_MRC}
\small p\left( {\gamma _{1:K} ,\gamma _{2:K} , \cdots ,\gamma _{K:K} } \right) = F \cdot \prod\limits_{i = 1}^K {p\left( {\gamma _{i:K} } \right)} \quad \text{for} \;\; \gamma_{1:K} \ge \gamma_{2:K} \ge \gamma_{3:K} \cdots \ge \gamma_{K:K},
\end{equation}
where $F=K!$.

\begin{theorem} (PDF of $\sum\limits_{n = 1}^K {\gamma _{n:K} }$ among $K$ ordered RVs)\\
Let $Z_1 = \sum\limits_{n=1}^{K} {\gamma_{n:K}}$ for convenience. We can derive the PDF of $Z=\left[Z_1\right]$ as
\small \begin{eqnarray} p_Z \left( {z_1 } \right) 
&=& \mathcal{L}_{S_1 }^{ - 1} \left\{ {\left[ {c\left( {\infty ,-S _1 } \right)} \right]^K } \right\},
\end{eqnarray} \normalsize
where $\mathcal{L}_{S_1 }^{ - 1}\{\cdot\}$ denotes the inverse Laplace transform with respect to $S_1$. 
\end{theorem}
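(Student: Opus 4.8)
The plan is to follow the two-step recipe of Section~II: first obtain the relevant (here one-dimensional) MGF in closed form, then invert it. For the first step I would start from the $K$-dimensional joint PDF~(\ref{eq:m-joint_PDF_MRC}) and write the Laplace transform of $p_Z$ as
\begin{equation} \small
\mathcal{L}_{S_1}\{p_Z\}(S_1)=E\!\left[e^{-S_1 Z_1}\right]=K!\int_0^\infty\! d\gamma_{1:K}\,p(\gamma_{1:K})e^{-S_1\gamma_{1:K}}\!\int_0^{\gamma_{1:K}}\!\! d\gamma_{2:K}\,p(\gamma_{2:K})e^{-S_1\gamma_{2:K}}\cdots\!\int_0^{\gamma_{K-1:K}}\!\! d\gamma_{K:K}\,p(\gamma_{K:K})e^{-S_1\gamma_{K:K}},
\end{equation}
where the ordered region $\gamma_{1:K}\ge\gamma_{2:K}\ge\cdots\ge\gamma_{K:K}\ge 0$ has been written as nested integrals and the $e^{-S_1\gamma_{i:K}}$ factors come from expanding $e^{-S_1\sum_n \gamma_{n:K}}$.

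Next I would observe that this nested integral is precisely $I_1$ as defined in Section~III, with $\lambda=-S_1$ and the outermost limit taken as $\gamma_{0:K}=\infty$. Invoking the simplifying relation~(\ref{eq:CDF_MGF_multiple}) with $m=1$ gives $I_1=\frac{1}{K!}\left[c(\infty,-S_1)\right]^{K}$, so the prefactor $K!$ cancels and $\mathcal{L}_{S_1}\{p_Z\}(S_1)=\left[c(\infty,-S_1)\right]^{K}$. Applying $\mathcal{L}_{S_1}^{-1}\{\cdot\}$ to both sides then yields the stated formula; no extra finite integration is needed here since all $K$ RVs are involved and form a single continuous partial sum. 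As an independent check I would note that $\sum_{n=1}^K\gamma_{n:K}=\sum_{i=1}^K\gamma_i$, so $Z_1$ is simply a sum of $K$ i.i.d. copies of $\gamma$ and its Laplace transform is automatically the $K$th power of $c(\infty,-S_1)=\int_0^\infty p(x)e^{-S_1 x}\,dx$.

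Since this is the simplest instance of the framework, I do not anticipate a substantive obstacle. The only points requiring care are bookkeeping ones: keeping the sign convention straight (the MGF $c(\infty,\lambda)$ evaluated at $\lambda=-S_1$ is the Laplace transform of $p$, which is why the inverse Laplace transform recovers the PDF), and checking that the closed form~(\ref{eq:CDF_MGF_multiple}) specializes correctly to the degenerate outermost limit $\gamma_{0:K}=\infty$ when $m=1$. Existence and uniqueness of the inverse transform are immediate once $c(\infty,-S_1)$ is recognized as a genuine Laplace transform of a probability density.
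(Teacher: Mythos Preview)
Your proposal is correct and follows essentially the same route as the paper: write the MGF of $Z_1$ as the nested ordered integral, collapse it via the identity~(\ref{eq:CDF_MGF_multiple}) to $[c(\infty,\lambda_1)]^K$, and invert. Your additional sanity check that $\sum_n \gamma_{n:K}=\sum_i \gamma_i$ (so the transform is trivially the $K$th power of the common MGF) is a nice shortcut the paper does not mention, but the core argument is identical.
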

\begin{proof}
The MGF of $Z$ is given by the expectation
\small \begin{eqnarray} \label{eq:MGF_of_pure_MRC_integralform} \!\!\!\!\!\!\!\!\!\!\!\!\!\!\!\! \!\!\!\!\!\!\!\! MGF_Z \left( {\lambda _1 } \right) = {\rm E}\left\{ {\exp \left( {\lambda _1 Z_1 } \right)} \right\} &=& \!\!F \cdot \!\!\! \int\limits_0^\infty  \!\! {d\gamma _{1:K} p\left( {\gamma _{1:K} } \right)\exp \left( {\lambda _1 \gamma _{1:K} } \right) \!\!\!\! \int\limits_0^{\gamma _{1:K} } \!\!\!\! {d\gamma _{2:K} p\left( {\gamma _{2:K} } \right)\exp \left( {\lambda _1 \gamma _{2:K} } \right)} }\nonumber\\ \!\!\!\!\!\!\!\!\!\!\!\!\!\!\!\! \!\!\!\!\!\!\!\!    && \!\! \times \cdots \times \!\!\!\!\!\!\!\!  \int\limits_0^{\gamma _{K - 1:K} } \!\!\!\!\!\!\! {d\gamma _{K:K} p\left( {\gamma _{K:K} } \right)\exp \left( {\lambda _1 \gamma _{K:K} } \right)} ,\end{eqnarray} \normalsize where ${\rm E}\left\{ \cdot \right\}$ denotes the expectation operator.
By applying (\ref{eq:CDF_MGF_multiple}), we can obtain the MGF of $Z_1 = \sum_{m=1}^{K} {\gamma_{m:K}}$ as
\small \begin{eqnarray} \label{eq:MGF_of_pure_MRC}MGF_Z \left( {\lambda _1 } \right) 
&=& \left[ {c\left( {\infty ,\lambda _1 } \right)} \right]^K.\end{eqnarray} \normalsize
Therefore, we can derive the PDF of $Z_1 = \sum_{m=1}^{K} {\gamma_{m:K}}$ by applying the inverse Laplace transform as
\small \begin{eqnarray} p_Z \left( {z_1 } \right) 
&=& \mathcal{L}_{S_1 }^{ - 1} \left\{ {MGF_Z \left( {-S _1 } \right)} \right\} \nonumber \\   &=& \mathcal{L}_{S_1 }^{ - 1} \left\{ {\left[ {c\left( {\infty ,-S _1 } \right)} \right]^K } \right\}.
\end{eqnarray} \normalsize
\end{proof}

\begin{theorem} (Joint PDF of $\gamma _{m:K}$ and $\sum\limits_{\scriptstyle n = 1 \hfill \atop \scriptstyle n \ne m \hfill}^K {\gamma _{n:K} } $)
\\
Let $Z_1  = \gamma _{m:K}$ and $Z_2  = \sum\limits_{\scriptstyle n = 1 \hfill \atop \scriptstyle n \ne m \hfill}^K {\gamma _{n:K} }$ for convenience. We can obtain the 2-dimensional joint PDF of $Z=\left[Z_1, Z_2 \right]$ as
\small \begin{eqnarray}
\!\!\!\!\!\!\!\!\!\!\!\!\!\!\!\!\!\!\!\!\!\!\!\!\!\!\!\! && \!\!\!\! p_Z \left( {z_1 ,z_2 } \right) = p_{\gamma _{m:K},\sum\limits_{\scriptstyle n = 1 \hfill \atop \scriptstyle n \ne m \hfill}^K {\gamma _{n:K} }} \left( {z_1 ,z_2 } \right)\nonumber
\\
\!\!\!\!\!\!\!\!\!\!\!\!\!\!\!\!\!\!\!\!\!\!\!\!\!\!\!\! &=& \!\! \begin{cases}
\!\! \frac{{K!}}{{\left( {K - 1} \right)!}}p\left( {z_1 } \right)\mathcal{L}_{S_2 }^{ - 1} \left\{ {\left[ {c\left( {z_1 , - S_2 } \right)} \right]^{\left( {K - 1} \right)} } \right\}&{\rm for }\; m=1, \; z_1\ge \frac{1}{K-1}z_2
\\
\!\! \frac{{K!}}{{\left( {K - m} \right)!\left( {m - 1} \right)!}}p\left( {z_1 } \right)\mathcal{L}_{S_2 }^{ - 1} \left\{ {\left[ {c\left( {z_1 , - S_2 } \right)} \right]^{\left( {K - m} \right)} \left[ {e\left( {z_1 , - S_2 } \right)} \right]^{\left( {m - 1} \right)} } \right\} &{\rm for }\;m\ge2.
\end{cases}
\end{eqnarray} \normalsize\:
\end{theorem}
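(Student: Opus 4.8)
The plan is to follow the two-step framework: first write the joint MGF of $Z_1=\gamma_{m:K}$ and $Z_2=\sum_{n\neq m}\gamma_{n:K}$ as an expectation, massage the resulting $K$-fold integral over the ordered region into a product of one-dimensional building blocks, and then invert the Laplace transform in the $Z_2$ variable while keeping $Z_1$ as a free parameter. Starting from the $K$-dimensional joint PDF $p(\gamma_{1:K},\dots,\gamma_{K:K})=K!\prod_i p(\gamma_{i:K})$ on the simplex $\gamma_{1:K}\ge\cdots\ge\gamma_{K:K}$, the MGF is
\small
\begin{equation*}
MGF_Z(\lambda_1,\lambda_2)=K!\int_{\gamma_{1:K}\ge\cdots\ge\gamma_{K:K}\ge 0} e^{\lambda_1\gamma_{m:K}}\,e^{\lambda_2\sum_{n\neq m}\gamma_{n:K}}\prod_{i=1}^K p(\gamma_{i:K})\,d\gamma_{i:K}.
\end{equation*}
\normalsize
The idea is to integrate out, for fixed $\gamma_{m:K}$, the variables $\gamma_{m+1:K},\dots,\gamma_{K:K}$ (all constrained to lie in $[0,\gamma_{m:K}]$ and nested) and separately the variables $\gamma_{1:K},\dots,\gamma_{m-1:K}$ (all constrained to lie in $[\gamma_{m:K},\infty)$ and nested). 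The first nested block is exactly the integral $I_{m+1}$ evaluated with upper limit $\gamma_{m:K}$, which by (\ref{eq:CDF_MGF_multiple}) equals $\frac{1}{(K-m)!}[c(\gamma_{m:K},\lambda_2)]^{K-m}$; the second nested block is the integral $I'_{m-1}$-type object evaluated with lower limit $\gamma_{m:K}$, which by (\ref{eq:EDF_MGF_multiple}) equals $\frac{1}{(m-1)!}[e(\gamma_{m:K},\lambda_2)]^{m-1}$. Here I must be careful that $\lambda_2$ (not $\lambda_1$) multiplies all the surviving exponentials, since $\gamma_{m:K}$ itself is the only variable carrying $\lambda_1$.

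After these two reductions, only the single integral over $\gamma_{m:K}$ remains:
\small
\begin{equation*}
MGF_Z(\lambda_1,\lambda_2)=\frac{K!}{(K-m)!(m-1)!}\int_0^\infty p(\gamma_{m:K})\,e^{\lambda_1\gamma_{m:K}}\,[c(\gamma_{m:K},\lambda_2)]^{K-m}\,[e(\gamma_{m:K},\lambda_2)]^{m-1}\,d\gamma_{m:K}.
\end{equation*}
\normalsize
Because we want the \emph{joint} PDF of $(Z_1,Z_2)$ rather than the PDF of $Z_2$ alone, the trick is to recognize that this outer integral is precisely a Laplace transform in $\lambda_1$ of the function $g(z_1,\lambda_2):=p(z_1)[c(z_1,\lambda_2)]^{K-m}[e(z_1,\lambda_2)]^{m-1}$; hence inverting the $\lambda_1$-transform is trivial and simply "reads off" the integrand with $\gamma_{m:K}$ replaced by $z_1$. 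Then setting $\lambda_1\to -S_1$, $\lambda_2\to -S_2$ and applying $\mathcal{L}_{S_2}^{-1}$ to the remaining factor $[c(z_1,-S_2)]^{K-m}[e(z_1,-S_2)]^{m-1}$ gives the claimed joint PDF, with the prefactor $K!/((K-m)!(m-1)!)$. The $m=1$ case is the degenerate instance where the $[e(z_1,-S_2)]^{m-1}$ factor disappears (empty product) and there are no $\gamma_{1:K},\dots,\gamma_{m-1:K}$ variables above $\gamma_{m:K}$, so one obtains $\frac{K!}{(K-1)!}p(z_1)\mathcal{L}_{S_2}^{-1}\{[c(z_1,-S_2)]^{K-1}\}$.

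The main obstacle I anticipate is bookkeeping the region of integration correctly when splitting the simplex at $\gamma_{m:K}$: I need to verify that the variables above and below $\gamma_{m:K}$ decouple into two independent nested chains (one with shared ceiling $\gamma_{m:K}$, one with shared floor $\gamma_{m:K}$) so that Appendix~\ref{AP:A}'s formulas (\ref{eq:CDF_MGF_multiple}) and (\ref{eq:EDF_MGF_multiple}) apply verbatim. A secondary point is justifying the support constraints in the final answer — e.g. $z_1\ge\frac{1}{K-1}z_2$ for $m=1$ — which come from the fact that $\gamma_{1:K}$ is the largest of the $K$ variables, so $Z_2=\sum_{n\ge 2}\gamma_{n:K}\le (K-1)\gamma_{1:K}=(K-1)z_1$; this is a consequence of the ordering and does not affect the MGF computation but should be stated for completeness. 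Everything else (interchanging expectation with the finite product structure, the analyticity needed for the Laplace inversions) is routine given the earlier lemmas.
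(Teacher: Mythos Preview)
Your proposal is correct and follows essentially the same route as the paper: derive the compact MGF by collapsing the lower nested block via (\ref{eq:CDF_MGF_multiple}) and the upper nested block via (\ref{eq:EDF_MGF_multiple}), then invert in $S_1$ (which picks off $\gamma_{m:K}=z_1$ via a delta function---exactly your ``read off the integrand'' step) and finally invert in $S_2$. The one point the paper makes more explicit is that the decoupling you flag as a concern is justified precisely by the reordering identity (\ref{eq:Integral_solution}), which lets you move $\gamma_{m:K}$ to the outermost position so that the chains above and below it become independent nested integrals with shared floor/ceiling $\gamma_{m:K}$; once you invoke that, your argument is complete.
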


\begin{proof}
The second order MGF of $Z=\left[Z_1,Z_2\right]$ is given by the expectation
\small \begin{eqnarray} \label{eq:joint_MGF_2_integralform}
 \!\!\!\!\!\!\!\!\!\!\!\!\!\!\!\! && \!\!\!\!\!\!\!\!MGF_Z \left( {\lambda _1 ,\lambda _2 } \right) = E\left\{ {\exp \left( {\lambda _1 Z_1  + \lambda _2 Z_2 } \right)} \right\} \nonumber
\\ 
 \!\!\!\!\!\!\!\!\!\!\!\!\!\!\!\! &=& \!\! F\!\cdot\!\!\! \int\limits_0^\infty \!\! {d\gamma _{1:K} p\left( {\gamma _{1:K} } \right)\exp \left( {\lambda _2 \gamma _{1:K} } \right) \!\!\!\! \int\limits_0^{\gamma _{1:K} } \!\!\!\! {d\gamma _{2:K} p\left( {\gamma _{2:K} } \right)\exp \left( {\lambda _2 \gamma _{2:K} } \right)}  \cdots \!\!\!\!\!\!\!\! \int\limits_0^{\gamma _{m - 2:K} }  \!\!\!\!\!\!\! {d\gamma _{m - 1:K} p\left( {\gamma _{m - 1:K} } \right)\exp \left( {\lambda _2 \gamma _{m - 1:K} } \right)} } \nonumber
\\ 
 \!\!\!\!\!\!\!\!\!\!\!\!\!\!\!\! &&  \times \!\!\!\!\!\! \int\limits_0^{\gamma _{m - 1:K} } {d\gamma _{m:K} p\left( {\gamma _{m:K} } \right)\exp \left( {\lambda _1 \gamma _{m:K} } \right)} \nonumber 
\\ 
 \!\!\!\!\!\!\!\!\!\!\!\!\!\!\!\! &&  \times \!\!\!\!\! \int\limits_0^{\gamma _{m:K} } {d\gamma _{m + 1:K} p\left( {\gamma _{m + 1:K} } \right)\exp \left( {\lambda _2 \gamma _{m + 1:K} } \right)}  \cdots \int\limits_0^{\gamma _{K - 1:K} } {d\gamma _{K:K} p\left( {\gamma _{K:K} } \right)\exp \left( {\lambda _2 \gamma _{K:K} } \right)}. 
\end{eqnarray} \normalsize
We show in Appendix~\ref{AP:D} that by applying (\ref{eq:CDF_MGF_multiple}), (\ref{eq:Integral_solution}) and (\ref{eq:EDF_MGF_multiple}), we can obtain the second order MGF of $Z_1  = \gamma _{m:K}$ and $Z_2  = \sum\limits_{\scriptstyle n = 1 \hfill \atop \scriptstyle n \ne m \hfill}^K {\gamma _{n:K} }$ as
\small \begin{eqnarray} \label{eq:joint_MGF_1}
 \!\!\!\!\!\!\!\!\!\!\!\!\!\!\!\!\!\!\!\!\!\!\!\!\!\!\!\!\!\! && \!\!\!\!\!\!\!\!MGF_Z \left( {\lambda _1 ,\lambda _2 } \right) \nonumber
\\
 \!\!\!\!\!\!\!\!\!\!\!\!\!\!\!\!\!\!\!\!\!\!\!\!\!\!\!\!\!\! &=& \!\! \frac{F}{{\left( {K - m} \right)!\left( {m - 1} \right)!}}\int\limits_0^\infty  {d\gamma _{m:K} p\left( {\gamma _{m:K} } \right)\exp \left( {\lambda _1 \gamma _{m:K} } \right)\left[ {c\left( {\gamma _{m:K} ,\lambda _2 } \right)} \right]^{\left( {K - m} \right)} } \left[ {e\left( {\gamma _{m:K} ,\lambda _2 } \right)} \right]^{\left( {m - 1} \right)}. 
\end{eqnarray} \normalsize

With the MGF expression given in (\ref{eq:joint_MGF_1}) at hand, we are now in the position to derive the 2-dimensional joint PDF of $Z_1  = \gamma _{m:K}$ and $Z_2  = \sum\limits_{\scriptstyle n = 1 \hfill \atop \scriptstyle n \ne m \hfill}^K {\gamma _{n:K} }$. Letting $\lambda _1  =  - S_1$ and $\lambda _2  =  - S_2$, we can derive the 2-dimensional joint PDF by applying the inverse Laplace transform as
\small \begin{eqnarray} \label{eq:joint_PDF_2_u}
 \!\!\!\!\!\!\!\!\!\!\!\! && \!\!\!\!\!\!\!\! p_Z \left( {z_1 ,z_2 } \right) = \mathcal{L}_{S_1 ,S_2 }^{ - 1} \left\{ {MGF_Z \left( { - S_1 , - S_2 } \right)} \right\} \nonumber
 \\ 
 \!\!\!\!\!\!\!\!\!\!\!\!&=& \!\!  \frac{{K!}}{{\left( {K - m} \right)!\left( {m - 1} \right)!}}\int\limits_0^\infty  {d\gamma _{m:K} \Bigg[p\left( {\gamma _{m:K} } \right)\mathcal{L}_{S_1 }^{ - 1} \left\{ {\exp \left( { - S_1 \gamma _{m:K} } \right)} \right\}} \nonumber
\\
 \!\!\!\!\!\!\!\!\!\!\!\! && \!\!  \times \mathcal{L}_{S_2 }^{ - 1} \left\{ {\left[ {c\left( {\gamma _{m:K} , - S_2 } \right)} \right]^{\left( {K - m} \right)} \left[ {e\left( {\gamma _{m:K} , - S_2 } \right)} \right]^{\left( {m - 1} \right)} } \right\}\Bigg].
\end{eqnarray} \normalsize
Based on the inverse Laplace transform properties in Appendix~\ref{AP:Inverse_LT},
\begin{equation} \label{eq:inverse_LT_1}
\small \mathcal{L}_{S_1 }^{ - 1} \left\{ {\exp \left( { - S_1 \gamma _{m:K} } \right)} \right\} = \delta \left( {z_1  - \gamma _{m:K} } \right).
\end{equation}
Therefore, substituting (\ref{eq:inverse_LT_1}) in (\ref{eq:joint_PDF_2_u}) we can obtain the desired 2-dimensional joint PDF of $Z_1  = \gamma _{m:K}$ and $Z_2  = \sum\limits_{\scriptstyle n = 1 \hfill \atop \scriptstyle n \ne m \hfill}^K {\gamma _{n:K} }$.
\normalsize
\end{proof}

\begin{theorem} ({Joint PDF of $\sum\limits_{n = 1}^m {\gamma _{n:K} }$ and $\sum\limits_{n = m + 1}^K {\gamma _{n:K} }$})
\\
Let $Z_1  = \sum\limits_{n = 1}^m {\gamma _{n:K} }$ and $Z_2  = \sum\limits_{n = m + 1}^K {\gamma _{n:K} }$ for convenience, then we can derive the 2-dimensional joint PDF of $Z=\left[Z_1, Z_2 \right]$  as
\small \begin{eqnarray} \label{eq:joint_PDF_3}
\!\!\!\!\!\!\!\!\!\!\!\! p_Z \left( {z_1 ,z_2 } \right) \!\! &=& \!\! p_{\sum\limits_{n = 1}^m {\gamma _{n:K} },\sum\limits_{n = m + 1}^K {\gamma _{n:K} }} \left( {z_1 ,z_2 } \right) \nonumber
\\
\!\! &=& \!\! \mathcal{L}_{S_1 ,S_2 }^{ - 1} \left\{ {MGF_Z \left( { - S_1 , - S_2 } \right)} \right\} \nonumber
\\ 
\!\! &=& \!\!  \frac{{K!}}{{\left( {K - m} \right)!\left( {m - 1} \right)!}}  \int\limits_0^\infty  {d\gamma _{m:K} \Bigg[p\left( {\gamma _{m:K} } \right) \mathcal{L}_{S_1}^{-1} \left\{ {\exp \left( {-S _1 \gamma _{m:K} } \right)\left[ {e\left( {\gamma _{m:K} ,-S _1 } \right)} \right]^{\left( {m - 1} \right)} } \right\}} \nonumber
\\
\!\! &&\!\!  \times \mathcal{L}_{S_2}^{-1} \left\{ {\left[ {c\left( {\gamma _{m:K} ,-S _2 } \right)} \right]^{\left( {K - m} \right)} } \right\}\Bigg]\quad \quad {\rm for}\; z_1\ge \frac{m}{K-m}z_2.
\end{eqnarray} \normalsize
\end{theorem}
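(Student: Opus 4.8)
The plan is to follow the same two-step recipe as in Theorem 2, specialized to the partition in which $\gamma_{1:K},\ldots,\gamma_{m:K}$ carry the exponential weight $\lambda_1$ (they form $Z_1$) and $\gamma_{m+1:K},\ldots,\gamma_{K:K}$ carry the weight $\lambda_2$ (they form $Z_2$). First I would write the joint MGF $MGF_Z(\lambda_1,\lambda_2) = {\rm E}\{\exp(\lambda_1 Z_1 + \lambda_2 Z_2)\}$ as the $K$-fold nested integral obtained from the joint PDF (\ref{eq:m-joint_PDF_MRC}) with $F = K!$, where the exponential weight switches from $\lambda_1$ to $\lambda_2$ exactly between $\gamma_{m:K}$ and $\gamma_{m+1:K}$, the integration limits being nested as $\infty \ge \gamma_{1:K} \ge \gamma_{2:K} \ge \cdots \ge \gamma_{K:K} \ge 0$.

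Second, I would single out $\gamma_{m:K}$ as the pivot variable. Using the integral-interchange identities collected in (\ref{eq:Integral_solution}), the $K$-fold integral is rearranged so that $\gamma_{m:K}$ is integrated on the outside over $(0,\infty)$, the $m-1$ variables $\gamma_{1:K},\ldots,\gamma_{m-1:K}$ (all larger than $\gamma_{m:K}$) form an ascending-to-$\infty$ chain, and the $K-m$ variables $\gamma_{m+1:K},\ldots,\gamma_{K:K}$ (all smaller than $\gamma_{m:K}$) form a chain bounded above by $\gamma_{m:K}$. The inner chain over the smaller variables is precisely the integral $I_m$ of (\ref{eq:CDF_MGF_multiple}) with index running from $m+1$ to $K$, evaluating to $\frac{1}{(K-m)!}[c(\gamma_{m:K},\lambda_2)]^{K-m}$; the chain over the larger variables is precisely $I'_m$ of (\ref{eq:EDF_MGF_multiple}), evaluating to $\frac{1}{(m-1)!}[e(\gamma_{m:K},\lambda_1)]^{m-1}$. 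Collecting everything yields the compact single-integral form
\[
MGF_Z(\lambda_1,\lambda_2) = \frac{K!}{(K-m)!\,(m-1)!}\int_0^\infty d\gamma_{m:K}\,p(\gamma_{m:K})\exp(\lambda_1\gamma_{m:K})\,[e(\gamma_{m:K},\lambda_1)]^{m-1}\,[c(\gamma_{m:K},\lambda_2)]^{K-m}.
\]
(This is the analogue of the Appendix~\ref{AP:D} computation for Theorem 2; note that here $\gamma_{m:K}$ also appears inside the EDF chain, so, unlike in Theorem 2, the residual one-dimensional $\gamma_{m:K}$-integral survives rather than being collapsed by a delta function.)

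Third, I would set $\lambda_1 = -S_1$, $\lambda_2 = -S_2$ and invert. Since the $\gamma_{m:K}$-integral is independent of $S_1$ and $S_2$ it commutes with $\mathcal{L}_{S_1,S_2}^{-1}\{\cdot\}$, and inside the integrand the $S_1$-dependent factor $\exp(-S_1\gamma_{m:K})[e(\gamma_{m:K},-S_1)]^{m-1}$ multiplies the $S_2$-dependent factor $[c(\gamma_{m:K},-S_2)]^{K-m}$, so by the product property of the inverse Laplace transform (Appendix~\ref{AP:Inverse_LT}) the two-dimensional inversion splits into a product of two one-dimensional inversions, giving exactly the claimed expression (\ref{eq:joint_PDF_3}). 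Finally, I would pin down the support: from $\gamma_{1:K}\ge\cdots\ge\gamma_{m:K}\ge\gamma_{m+1:K}\ge\cdots\ge\gamma_{K:K}\ge 0$ one gets $Z_1 = \sum_{n=1}^m \gamma_{n:K} \ge m\,\gamma_{m:K}$ and $Z_2 = \sum_{n=m+1}^K \gamma_{n:K} \le (K-m)\,\gamma_{m:K}$, hence $\frac{1}{K-m}z_2 \le \gamma_{m:K} \le \frac{1}{m}z_1$, i.e. $z_1 \ge \frac{m}{K-m}z_2$.

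The main obstacle I anticipate is the bookkeeping in the integral-interchange step: one must check carefully that after moving $\gamma_{m:K}$ to the outermost position the remaining limits genuinely decouple into the two independent chains matching the hypotheses of (\ref{eq:CDF_MGF_multiple}) and (\ref{eq:EDF_MGF_multiple}) — in particular that, once $\gamma_{m:K}$ is fixed, no residual coupling survives between the ``upper'' block $\{\gamma_{1:K},\ldots,\gamma_{m-1:K}\}$ and the ``lower'' block $\{\gamma_{m+1:K},\ldots,\gamma_{K:K}\}$. Everything else — the factorization of the $2$-D inverse Laplace transform and the support inequality — is routine.
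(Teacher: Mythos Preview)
Your proposal is correct and follows essentially the same route as the paper: write the MGF as the $K$-fold nested integral, collapse the lower block $\{\gamma_{m+1:K},\ldots,\gamma_{K:K}\}$ via (\ref{eq:CDF_MGF_multiple}), use the interchange (\ref{eq:Integral_solution}) to bring $\gamma_{m:K}$ outermost, collapse the upper block via (\ref{eq:EDF_MGF_multiple}), and then invert term-by-term. The paper's Appendix~\ref{AP:E} does exactly this (in the order ``collapse lower, interchange, collapse upper''), and your added support argument for $z_1\ge\frac{m}{K-m}z_2$ is a welcome detail the paper states but does not spell out.
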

\begin{proof}
The second order MGF of $Z=\left[Z_1,Z_2\right]$ is given by the expectation
\small \begin{eqnarray} \label{eq:joint_MGF_7_integralform}
 \!\!\!\!\!\!\!\!\!\!\!\! && \!\!\!\!\!\!\!\! MGF_Z \left( {\lambda _1 ,\lambda _2 } \right) = E\left\{ {\exp \left( {\lambda _1 Z_1  + \lambda _2 Z_2 } \right)} \right\} \nonumber
\\ 
 \!\!\!\!\!\!\!\!\!\!\!\!&=& \!\! F\int\limits_0^\infty  {d\gamma _{1:K} p\left( {\gamma _{1:K} } \right)\exp \left( {\lambda _1 \gamma _{1:K} } \right) \cdots \int\limits_0^{\gamma _{m - 1:K} } {d\gamma _{m:K} p\left( {\gamma _{m:K} } \right)\exp \left( {\lambda _1 \gamma _{m:K} } \right)} } \\ 
 \!\!\!\!\!\!\!\!\!\!\!\! &&\times \!\! \int\limits_0^{\gamma _{m:K} } {d\gamma _{m + 1:K} p\left( {\gamma _{m + 1:K} } \right)\exp \left( {\lambda _2 \gamma _{m + 1:K} } \right)}  \cdots \int\limits_0^{\gamma _{K - 1:K} } {d\gamma _{K:K} p\left( {\gamma _{K:K} } \right)\exp \left( {\lambda _2 \gamma _{K:K} } \right)}.
\end{eqnarray} \normalsize

\noindent We show in Appendix~\ref{AP:E} that by applying (\ref{eq:CDF_MGF_multiple}) and (\ref{eq:Integral_solution}) and then (\ref{eq:EDF_MGF_multiple}), we can obtain the second order MGF of $Z$ as
\small \begin{eqnarray} \label{eq:joint_MGF_2}
\!\!\!\!\!\!\!\!\!\!\!\!\!\!\!\!\!\!\!\!\!\!\!\!\!\!\!\! && \!\!\!\!\!\!\!\! MGF_Z \left( {\lambda _1 ,\lambda _2 } \right) \nonumber
\\
 \!\!\!\!\!\!\!\!\!\!\!\!\!\!\!\!\!\!\!\!\!\!\!\!\!\!\!\! &=& \!\! \frac{{K!}}{{\left( {K - m} \right)!\left( {m - 1} \right)!}}\int\limits_0^\infty  {d\gamma _{m:K} p\left( {\gamma _{m:K} } \right)\exp \left( {\lambda _1 \gamma _{m:K} } \right)\left[ {c\left( {\gamma _{m:K} ,\lambda _2 } \right)} \right]^{\left( {K - m} \right)} \left[ {e\left( {\gamma _{m:K} ,\lambda _1 } \right)} \right]^{\left( {m - 1} \right)} }.
\end{eqnarray} \normalsize

Again, letting $\lambda _1  =  - S_1$ and $\lambda _2  =  - S_2$, we can obtain the desired 2-dimensional joint PDF of $Z_1  = \sum\limits_{n = 1}^m {\gamma _{n:K} }$ and $Z_2  = \sum\limits_{n = m + 1}^K {\gamma _{n:K} }$ by applying the inverse Laplace transform.
\normalsize
\end{proof}

\section{Sample cases when only $K_s$ ordered RVs are considered}
Let us now consider the cases where only the best $K_s$($\le K$) ordered RVs are involved. Assuming the original $\left\{ \gamma_i \right\}$ are  i.i.d. RVs with a common arbitrary PDF $p \left( \gamma \right)$ and CDF $P\left( {\gamma } \right)$, the $K_s$-dimensional joint PDF of $\left\{ {\gamma _{i:K} } \right\}_{i = 1}^{K_s}$ is simply given by~\cite{kn:Order_Statistics}
\begin{equation} \label{eq:m-joint_PDF_GSC}
\small p\left( {\gamma _{1:K} ,\gamma _{2:K} , \cdots ,\gamma _{K_s :K} } \right) = F \cdot \prod\limits_{i = 1}^{K_s } {p\left( {\gamma _{i:K} } \right)} \left[ {P\left( {\gamma _{K_s :K} } \right)} \right]^{K - K_s }, 
\end{equation}
where $F = K_s !\binom{K}{K_s} = \frac{K!}{\left( {K - K_s } \right)!}$.

\begin{theorem} ({PDF of $\sum\limits_{n = 1}^{K_s } {\gamma _{n:K} }$}, $K_s\ge 2$)
\\
Let $Z' = \sum\limits_{n=1}^{K_s} {\gamma_{n:K}}$ for convenience, then we can derive the PDF of $Z'$ as
\small \begin{eqnarray} \label{eq:PDF_of_pure_GSC}
p_{Z'} \left( x \right) = p_{\sum_{n = 1}^{K_s } {\gamma _{n:K} } } \left( x \right)=
\int_0^{\frac{x}{{K_s }}} {p_Z \left( {x - z_2 ,z_2 } \right)dz_2 } & \text{for } K_s \ge 2, \end{eqnarray} \normalsize
\end{theorem}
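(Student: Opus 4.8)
The plan is to reduce Theorem~5 to an already-established two-dimensional joint PDF and then integrate out the auxiliary variable. Recall from Theorem~3 that we have in hand the joint PDF $p_Z(z_1,z_2)$ of the pair $Z_1 = \sum_{n=1}^{m}\gamma_{n:K}$ and $Z_2 = \sum_{n=m+1}^{K}\gamma_{n:K}$, valid on the cone $z_1 \ge \frac{m}{K-m}z_2$. The point here is that $\sum_{n=1}^{K_s}\gamma_{n:K}$ with $K_s \le K$ is just the ``top'' partial sum, and if we could treat it as $Z_1$ with $Z_2$ being the sum of the remaining retained terms, then $Z' = Z_1 + Z_2$ and the PDF of the sum follows from the standard convolution-type identity $p_{Z'}(x) = \int p_Z(x-z_2, z_2)\,dz_2$. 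So first I would set up this marginalization: write $Z' = Z_1 + Z_2$ for an appropriate split of the first $K_s$ ordered RVs into two partial sums, apply the change of variables $(z_1,z_2) \mapsto (z' = z_1+z_2,\, z_2)$ whose Jacobian is unity, and obtain $p_{Z'}(x) = \int_0^{\infty} p_Z(x - z_2, z_2)\,dz_2$.

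Second, I would pin down the correct integration limits, which is where the geometry of ordered RVs enters. Because the variables are ordered and nonnegative, the smaller partial sum $z_2$ cannot exceed its ``fair share'' of the total: if $Z_2$ collects $j$ of the ordered RVs each no larger than the smallest RV in the $Z_1$ block, then on the support we must have $z_1 \ge \frac{m'}{j} z_2$ where $m'$ is the number of terms in $Z_1$ (this is exactly the cone constraint inherited from Theorem~3). Substituting $z_1 = x - z_2$ gives $x - z_2 \ge \frac{m'}{j}z_2$, i.e. $z_2 \le \frac{j}{m'+j}x$. Choosing the split so that $Z_1$ and $Z_2$ together are the first $K_s$ RVs and so that the ratio works out, one gets the stated upper limit $\frac{x}{K_s}$; the lower limit is $0$ since all RVs are nonnegative. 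I would verify that the particular split making $m'+j = K_s$ with $j/(m'+j)\cdot$ something collapsing to $1/K_s$ is consistent — concretely, taking $Z_2 = \gamma_{K_s:K}$ as a single term alongside $Z_1 = \sum_{n=1}^{K_s-1}\gamma_{n:K}$ the cone constraint from Theorem~3 (with the roles adapted to a $K_s$-RV problem via the PDF~\eqref{eq:m-joint_PDF_GSC}) reads $z_1 \ge (K_s-1)z_2$, giving $z_2 \le x/K_s$, which matches.

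The main obstacle I anticipate is not the convolution step itself but justifying that the ``$p_Z$'' appearing inside the integral is the correct joint PDF for the $K_s$-retained-RVs setting rather than the all-$K$ setting of Theorem~3. When only $K_s < K$ RVs are retained, the joint PDF picks up the extra factor $[P(\gamma_{K_s:K})]^{K-K_s}$ and the normalization constant changes to $F = K!/(K-K_s)!$, per~\eqref{eq:m-joint_PDF_GSC}; as flagged in the ``Special cases'' discussion, one must handle $\gamma_{K_s:K}$ separately when forming the joint MGF. So I would first state (or invoke from the $K_s$-specific development that precedes this theorem) the two-dimensional joint PDF $p_Z(z_1,z_2)$ of $\big(\sum_{n=1}^{K_s-1}\gamma_{n:K},\ \gamma_{K_s:K}\big)$ — obtained by the same MGF-and-inverse-Laplace route, treating $\gamma_{K_s:K}$ as its own group — and only then perform the marginalization. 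Once that object is correctly in hand, the remaining work is the routine change of variables and limit bookkeeping sketched above, and the result~\eqref{eq:PDF_of_pure_GSC} follows directly. I would close by noting the restriction $K_s \ge 2$ is exactly what is needed for there to be at least two groups (one of size $K_s-1 \ge 1$ and one of size $1$) so that a genuine two-dimensional joint PDF exists to integrate.
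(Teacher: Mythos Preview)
Your proposal is correct and, once you abandon the initial detour through Theorem~3, takes essentially the same approach as the paper: set $Z_1 = \sum_{n=1}^{K_s-1}\gamma_{n:K}$ and $Z_2 = \gamma_{K_s:K}$, obtain their joint PDF via the MGF/inverse-Laplace route starting from the $K_s$-specific density~\eqref{eq:m-joint_PDF_GSC} (treating $\gamma_{K_s:K}$ as its own group), and then marginalize using the ordering constraint $z_1 \ge (K_s-1)z_2$ to get the upper limit $z_2 \le x/K_s$. The paper's proof is precisely this, with the resulting joint PDF displayed as~\eqref{eq:PDF_of_pure_GSC_m}.
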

where
\begin{equation} \label{eq:PDF_of_pure_GSC_m} \small
p_Z \left( {z_1 ,z_2 } \right) = \frac{F}{{\left( {K_s  - 1} \right)!}}p\left( {z_2 } \right)\left[ {c\left( {z_2 } \right)} \right]^{\left( {K - K_s } \right)} \mathcal{L}_{S_1 }^{ - 1}\left\{ {\left[ {e\left( {z_2 , - S_1 } \right)} \right]^{\left( {K_s  - 1} \right)} } \right\}.
\end{equation}

\begin{proof}
We only need to consider $\gamma_{K_s:K}$ separately in this case. 
Let $Z_1=\sum\limits_{n=1}^{K_s-1} {\gamma_{n:K}}$ and $Z_2=\gamma_{K_s:K}$. The target second order MGF of $Z=[Z_1, Z_2]$ is given by the expectation in \small \begin{eqnarray} \label{eq:MGF_of_pure_GSC_integralform}
MGF_Z \left( {\lambda _{1,} \lambda _2 } \right) &=& {\rm E}\left\{ {\exp \left( {\lambda _1 z_1  + \lambda _2 z_2 } \right)} \right\} \nonumber
\\
&=& F\int_0^\infty  {d\gamma _{1:K} p\left( {\gamma _{1:K} } \right)\exp \left( {\lambda _1 \gamma _{1:K} } \right) \cdots \int_0^{\gamma _{K_s  - 2:K} } {d\gamma _{K_s  - 1:K} } } p\left( {\gamma _{K_s  - 1:K} } \right)\exp \left( {\lambda _1 \gamma _{K_s  - 1:K} } \right) \nonumber
\\
&& \times \int_0^{_{K_s  - 1:K} } {d\gamma _{K_s :K} p\left( {\gamma _{K_s :K} } \right)\exp \left( {\lambda _2 \gamma _{K_s :K} } \right)\left[ {c\left( {\gamma _{K_s :K} } \right)} \right]^{K - K_s } }.
\end{eqnarray} \normalsize
By simply applying (\ref{eq:Integral_solution}) and then (\ref{eq:EDF_MGF_multiple}) to (\ref{eq:MGF_of_pure_GSC_integralform}), we can obtain the second order MGF result as
\small\begin{eqnarray} \label{MGF_pure_GSC}
\!\!\!\!MGF_Z \left( {\lambda _{1,} \lambda _2 } \right) = \frac{F}{{\left( {K_s  - 1} \right)!}}\int_0^\infty  {d\gamma _{K_s :K} p\left( {\gamma _{K_s :K} } \right)\exp \left( {\lambda _2 \gamma _{K_s :K} } \right)\left[ {c\left( {\gamma _{K_s :K} } \right)} \right]^{K - K_s } } \left[ {e\left( {\gamma _{K_s :K} },\lambda_1 \right)} \right]^{K_s  - 1}.
\end{eqnarray} \normalsize
Again, letting $\lambda _1  =  - S_1$ and $\lambda _2  =  - S_2$, we can obtain the 2-dimensional joint PDF of $Z_1=\sum\limits_{n=1}^{K_s-1} {\gamma_{n:K}}$ and $Z_2=\gamma_{K_s:K}$ by applying the inverse Laplace transform. 
Finally, noting that  $Z'=Z_1+Z_2$, we can obtain the target PDF of $Z'$ with the following finite integration
\begin{equation} \small
p_{Z'}(x)=\int_0^{\frac{x}{{K_s }}} {p_Z \left( {x - z_2 ,z_2 } \right)dz_2 }.
\end{equation}\normalsize
\end{proof}

\begin{theorem} ({Joint PDF of $\gamma _{m:K}$ and $\sum\limits_{\scriptstyle n = 1 \hfill \atop \scriptstyle n \ne m \hfill}^{K_s } {\gamma _{n:K} }$})

Let $X=\gamma _{m:K}$ and $Y=\sum\limits_{\scriptstyle n = 1 \hfill \atop \scriptstyle n \ne m \hfill}^{K_s } {\gamma _{n:K} }$, then the joint PDF of $Z=\left[X, Y\right]$ can be obtained as
\begin{eqnarray} \label{eq:joint_PDF_1}
\!\!\!\!\!\!\!\!\!\!\!\!\!\!  && \!\!\!\!\!\!\! p_{Z}\left(x, y\right) = p_{\gamma _{m:K} ,\sum\limits_{\scriptstyle n = 1 \hfill \atop 
  \scriptstyle n \ne m \hfill}^{K_s } {\gamma _{n:K} } } \left( {x,y} \right) \nonumber
\\
\!\!\!\!\!\!\!\!\!\!\!\!\!\! &=&\!\!\!\!\!\!\!
\begin{cases}
\int_{\left( {\frac{{K_s  - 2}}{{K_s  - 1}}} \right)y}^{\left( {K_s  - 2} \right)x} {p_{\gamma _{1:K} ,\sum\limits_{n = 2}^{K_s  - 1} {\gamma _{n:K} } ,\gamma _{K_s :K} } \left( {x,z_2 ,y - z_2 } \right)dz_2 } , &m = 1,
\\
\int_0^x {\int_{\left( {m - 1} \right)x}^{y - \left( {K_s  - m} \right)z_4 } {p_{\sum\limits_{n = 1}^{m - 1} {\gamma _{n:K} } ,\gamma _{m:K} ,\sum\limits_{n = m + 1}^{K_s  - 1} {\gamma _{n:K} } ,\gamma _{K_s :K} } \!\!\!\!\!\!\! \left( {z_1 ,x,y \!-\!z_1\!-\!z_4 ,z_4 } \right)dz_1 } dz_4 } , &1 < m < K_s  - 1,
\\
\int_{\left( {K_s  - 2} \right)x}^y {p_{\sum\limits_{n = 1}^{K_s  - 2} {\gamma _{n:K} } ,\gamma _{K_s  - 1:K} ,\gamma _{K_s :K} } \left( {z_1 ,x,y - z_1 } \right)dz_1 } , &m = K_s  - 1,
\\
p_{\gamma _{K_s :K} ,\sum\limits_{n = 1}^{K_s  - 1} {\gamma _{n:K} } } \left( {x,y} \right), &m = K_s,
\end{cases}
\end{eqnarray}
\\
or equivalently
 \begin{eqnarray} \label{eq:joint_PDF_2}
\!\!\!\!\!\!\!\!\!\!\!\!\!\!  && \!\!\!\!\!\!\! p_{\gamma _{m:K} ,\sum\limits_{\scriptstyle n = 1 \hfill \atop 
  \scriptstyle n \ne m \hfill}^{K_s } {\gamma _{n:K} } } \left( {x,y} \right) \nonumber
\\
\!\!\!\!\!\!\!\!\!\!\!\!\!\! &=&\!\!\!\!\!\!\!
\begin{cases}
\int_{\left( {\frac{{K_s  - 2}}{{K_s  - 1}}} \right)y}^{\left( {K_s  - 2} \right)x} {p_{\gamma _{1:K} ,\sum\limits_{n = 2}^{K_s  - 1} {\gamma _{n:K} } ,\gamma _{K_s :K} } \left( {x,z_2 ,y - z_2 } \right)dz_2 } , & m = 1,
\\
\int_0^x {\int_{\left( {K_s  - m - 1} \right)z_4 }^{\left( {K_s  - m - 1} \right)x} {p_{\sum\limits_{n = 1}^{m - 1} {\gamma _{n:K} } ,\gamma _{m:K} ,\sum\limits_{n = m + 1}^{K_s  - 1} {\gamma _{n:K} } ,\gamma _{K_s :K} } \!\!\!\!\!\!\! \left( {y \!- \!z_3 \! - \!z_4 ,x,z_3 ,z_4 } \right)dz_3 } dz_4 }, &1 < m < K_s  - 1,
\\
\int_0^x {p_{\sum\limits_{n = 1}^{K_s  - 2} {\gamma _{n:K} } ,\gamma _{K_s  - 1:K} ,\gamma _{K_s :K} } \left( {y - z_3 ,x,z_3 } \right)dz_3 } , &m = K_s  - 1,
\\
p_{\gamma _{K_s :K} ,\sum\limits_{n = 1}^{K_s  - 1} {\gamma _{n:K} } } \left( {x,y} \right), &m = K_s.
\end{cases}
\end{eqnarray}
\end{theorem}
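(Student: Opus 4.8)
The plan is to reduce the joint PDF of $\gamma_{m:K}$ and the scattered partial sum $\sum_{n\neq m}^{K_s}\gamma_{n:K}$ to lower-dimensional joint PDFs that have already been (or can be) computed by the MGF-based machinery of the previous sections, and then to recover the target by a finite integration over the introduced auxiliary variables. The key observation is that when only $K_s<K$ of the ordered RVs are retained, the $K_s$-th order statistic $\gamma_{K_s:K}$ must always be isolated (as explained in the ``Special cases'' subsection), and when $m\notin\{1,K_s\}$ the single RV $\gamma_{m:K}$ splits the remaining retained RVs into the two continuous blocks $\{\gamma_{1:K},\dots,\gamma_{m-1:K}\}$ and $\{\gamma_{m+1:K},\dots,\gamma_{K_s-1:K}\}$. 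So the natural ``finest'' decomposition is into the groups
$$
\Bigl\{\textstyle\sum_{n=1}^{m-1}\gamma_{n:K}\Bigr\},\quad \{\gamma_{m:K}\},\quad \Bigl\{\textstyle\sum_{n=m+1}^{K_s-1}\gamma_{n:K}\Bigr\},\quad \{\gamma_{K_s:K}\},
$$
whose 4-dimensional joint PDF is obtained in step i)--ii) of the framework exactly as in Theorem~6's proof (writing the 5-dimensional joint MGF with $\gamma_{K_s:K}$ and $\gamma_{m:K}$ handled by the interval/CDF relations, then inverting). The boundary cases $m=1$, $m=K_s-1$, and $m=K_s$ collapse one or two of these blocks and therefore reduce to 3-dimensional or 2-dimensional joint PDFs; in particular $m=K_s$ is just the already-available joint PDF of $\gamma_{K_s:K}$ and $\sum_{n=1}^{K_s-1}\gamma_{n:K}$.

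The core step is then the change of variables and the determination of the integration limits. Writing $Z_1=\sum_{n=1}^{m-1}\gamma_{n:K}$, $Z_2=\gamma_{m:K}$, $Z_3=\sum_{n=m+1}^{K_s-1}\gamma_{n:K}$, $Z_4=\gamma_{K_s:K}$, the target variables are $X=Z_2$ and $Y=Z_1+Z_3+Z_4$, so for fixed $(x,y)$ we integrate the 4-dimensional joint PDF over the slice $\{z_2=x,\ z_1+z_3+z_4=y-\cdots\}$, i.e.\ over two free variables, which is why the middle case is a double integral and the end cases single integrals. The limits come purely from the ordering $\gamma_{1:K}\ge\cdots\ge\gamma_{K_s:K}\ge 0$: within the top block each of the $m-1$ summands lies in $[\gamma_{m:K},\infty)$ so $Z_1\ge(m-1)x$; within the middle block each of the $K_s-m-1$ summands lies in $[\gamma_{K_s:K},\gamma_{m:K}]$ so $(K_s-m-1)z_4\le Z_3\le(K_s-m-1)x$; and $0\le z_4=\gamma_{K_s:K}\le\gamma_{m:K}=x$. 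Substituting these into $z_1=y-z_3-z_4$ (the form used in the second display) or $z_3=y-z_1-z_4$ (the first display) and simplifying gives precisely the stated limits, and the two displays (\ref{eq:joint_PDF_1}) and (\ref{eq:joint_PDF_2}) are seen to be equal by the integral-interchange identity (\ref{eq:Integral_solution}) applied to the $(z_1,z_3)$ pair under the common constraint $z_1+z_3=y-\cdots$, together with the EDF/CDF/interval interrelations.

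Concretely I would proceed as follows. First, dispose of $m=K_s$ by invoking Theorem~6 (or its $K_s$-case analogue) directly. Second, treat the generic case $1<m<K_s-1$: set up the 5-dimensional joint MGF of $\{\gamma_{m:K}\}$, $\{\gamma_{K_s:K}\}$ and the three (possibly empty) contiguous sums, apply (\ref{eq:Integral_solution}) to bring it to a single outer integral over $\gamma_{m:K}$ with the inner structure expressed via $c(\cdot,\cdot)$, $e(\cdot,\cdot)$, $\mu(\cdot,\cdot)$ using (\ref{eq:CDF_MGF_multiple})--(\ref{eq:IntervalMGF_multiple}), invert the Laplace transform as in Theorems~2 and~3 to get the 4-dimensional joint PDF $p_{Z_1,Z_2,Z_3,Z_4}$, and finally integrate out the two redundant coordinates with the limits derived above to obtain $p_{X,Y}(x,y)$. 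Third, specialize the limits when $m=1$ (top block empty, so $Z_1\equiv 0$, a 3-dimensional joint PDF and a single integral over $z_2=\sum_{n=2}^{K_s-1}\gamma_{n:K}$) and when $m=K_s-1$ (middle block empty, so $Z_3\equiv 0$, again 3-dimensional with a single integral over $z_1$ or $z_3$). Fourth, verify the equivalence of (\ref{eq:joint_PDF_1}) and (\ref{eq:joint_PDF_2}) via the ordering-induced interchange of the inner limits. The main obstacle I anticipate is bookkeeping the integration limits correctly in the middle case --- ensuring each bound uses the tightest available ordered variable so that the finite integral's region is exactly the intersection of the simplex $\{z_1+z_3+z_4=y-x,\ z_i\ge 0\}$ with the ordering constraints, and checking that the support condition (the $\gamma_{m:K}$-vs-$Y$ inequality analogous to the ones appearing in Theorems~2--3) is automatically encoded by nonemptiness of that region; the Laplace inversions themselves are routine given the earlier lemmas.
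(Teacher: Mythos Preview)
Your proposal is correct and follows essentially the same route as the paper: split into the four cases $m=1$, $1<m<K_s-1$, $m=K_s-1$, $m=K_s$; in each case isolate $\gamma_{K_s:K}$ (and, when $1<m<K_s$, also $\gamma_{m:K}$) so that the remaining retained RVs form contiguous blocks; compute the joint MGF of the resulting groups via (\ref{eq:Integral_solution}) together with (\ref{eq:CDF_MGF_multiple}), (\ref{eq:EDF_MGF_multiple}), (\ref{eq:IntervalMGF_multiple}); invert Laplace to get the higher-dimensional joint PDF; and finally marginalize over the auxiliary coordinates with limits read off from the ordering. One small slip: in the generic case $1<m<K_s-1$ there are only \emph{two} contiguous sums besides the two singletons, so the joint MGF you set up is 4-dimensional (as you in fact said earlier), not 5-dimensional; the paper likewise works with the 4-dimensional MGF in (\ref{eq:joint_MGF_GSC_2}) and the 4-dimensional PDF in (\ref{eq:joint_PDF_GSC_2}) before integrating down to two variables.
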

\begin{proof}
To derive the joint PDF of $\gamma _{m:K}$ and $\sum\limits_{\scriptstyle n = 1 \hfill \atop \scriptstyle n \ne m \hfill}^{K_s } {\gamma _{n:K} }$, we need to consider four cases i) $m=1$, ii) $ 1<m<K_s-1$, iii) $m=K_s-1$ and iv) $m=K_s$ separately based on our unified framework. While for case iv), we can start with the second order MGF of $\gamma _{m:K}$ and $\sum\limits_{\scriptstyle n = 1 \hfill \atop \scriptstyle n \ne m \hfill}^{K_s } {\gamma _{n:K} }$ directly, we should consider substituted groups instead of the original groups for cases i), ii), and iii). More specifically, for cases i) and iii), we need to consider $\gamma_{K_s:K}$ separately as shown in Fig.~\ref{Example_3_1} and~\ref{Example_3_3} whereas, for case ii), as one of original groups is split by $\gamma_{m:K}$, we should consider substituted groups for the split group instead of original groups  as shown in Fig.~\ref{Example_3_2}. As a result, we will start by obtaining a four order MGF for case ii) and a three order MGF for case i) and case iii). In all these cases, the higher dimensional joint PDF can then be used to find the desired 2-dimensional joint PDF of interest by transformation.

Applying the results in (\ref{eq:Integral_solution}), (\ref{eq:CDF_MGF_multiple}), (\ref{eq:EDF_MGF_multiple}) and (\ref{eq:IntervalMGF_multiple}), we derive in Appendix~\ref{AP:G} the following joint MGF for different cases
\newpage
\begin{enumerate}
\item[a.] $m=1$ 
\\Let $Z_1  = \gamma _{1:K}$, $Z_2  = \sum\limits_{n = 2}^{K_s  - 1} {\gamma _{n:K} }$, and $Z_3  = \gamma _{K_s :K}
$, then
\small \begin{eqnarray} \label{eq:joint_MGF_GSC_3}
\!\!\!\!\!\!\!\!\!\!\!\! && \!\!\!\!\!\!\!\! MGF_Z \left( {\lambda _1 ,\lambda _2 ,\lambda _3 } \right) \nonumber
\\
\!\!\!\!\!\!\!\!\!\!\!\! &=& \!\!\!\! F\int\limits_0^\infty  {d\gamma _{K_s :K} p\left( {\gamma _{K_s :K} } \right)\exp \left( {\lambda _3 \gamma _{K_s :K} } \right)\left[ {c\left( {\gamma _{K_s :K} } \right)} \right]^{\left( {K - K_s } \right)} }  \nonumber
\\ 
\!\!\!\!\!\!\!\!\!\!\!\! && \!\!\!\!  \times \int\limits_{\gamma _{K_s :K} }^\infty  {d\gamma _{1:K} p\left( {\gamma _{1:K} } \right)\exp \left( {\lambda _1 \gamma _{1:K} } \right)\frac{1}{{\left( {K_s  - 2} \right)!}}\left[ {\mu \left( {\gamma _{K_s :K} ,\gamma _{1:K} ,\lambda _2 } \right)} \right]^{\left( {K_s  - 2} \right)} } . 
\end{eqnarray} \normalsize
\item[b.] $1<m<K_s-1$ 
\\Let $Z_1  = \sum\limits_{n = 1}^{m - 1} {\gamma _{n:K} }$, $Z_2  = \gamma _{m:K}$, $Z_3 = \sum\limits_{n = m + 1}^{K_s  - 1} {\gamma _{n:K} }$, and $Z_4  = \gamma _{K_s :K}$, then
\small \begin{eqnarray} \label{eq:joint_MGF_GSC_2}
\!\!\!\!\!\!\!\!\!\!\!\!&&\!\!\!\!\!\!\!\! MGF_Z \left( {\lambda _1 ,\lambda _2 ,\lambda _3 ,\lambda _4 } \right) \nonumber
\\
\!\!\!\!\!\!\!\!\!\!\!\!&=& \!\!\!\!\frac{F}{{\left( {K_s  - m - 1} \right)!\left( {m - 1} \right)!}}\int\limits_0^\infty  {d\gamma _{K_s :K} p\left( {\gamma _{K_s :K} } \right)\exp \left( {\lambda _4 \gamma _{K_s :K} } \right)\left[ {c\left( {\gamma _{K_s :K} } \right)} \right]^{\left( {K - K_s } \right)} } \nonumber
\\
\!\!\!\!\!\!\!\!\!\!\!\!&&\!\!\!\! \times \!\!\!\! \int\limits_{\gamma _{K_s :K} }^\infty \!\!\! {d\gamma _{m:K} p\left( {\gamma _{m:K} } \right)\exp \left( {\lambda _2 \gamma _{m:K} } \right)\left[ {e\left( {\gamma _{m:K} ,\lambda _1 } \right)} \right]^{\left( {m - 1} \right)} \left[ {\mu \left( {\gamma _{K_s :K} ,\gamma _{m:K} ,\lambda _3 } \right)} \right]^{\left( {K_s  - m - 1} \right)} }.
\end{eqnarray} \normalsize
\item[c.] $m=K_s-1$
\\ Let $Z_1  = \sum\limits_{n = 1}^{K_s  - 2} {\gamma _{n:K} }$, $Z_2  = \gamma _{K_s  - 1:K}$ and $Z_3  = \gamma _{K_s :K}$, then
\small \begin{eqnarray} \label{eq:joint_MGF_GSC_4}
\!\!\!\!\!\!\!\!\!\!\!\!\!\!\!\!&& \!\!\!\!\!\!\!\!MGF_Z \left( {\lambda _1 ,\lambda _2 ,\lambda _3 } \right) \nonumber
\\
\!\!\!\!\!\!\!\!\!\!\!\!&=& \!\!\!\!F\int\limits_0^\infty  {d\gamma _{K_s :K} p\left( {\gamma _{K_s :K} } \right)\exp \left( {\lambda _3 \gamma _{K_s :K} } \right)\left[ {c\left( {\gamma _{K_s :K} } \right)} \right]^{\left( {K - K_s } \right)} }  \nonumber
\\ 
\!\!\!\!\!\!\!\!\!\!\!\! && \!\!\!\!  \times \!\!\!\! \int\limits_{\gamma _{K_s :K} }^\infty \!\!\!\! {d\gamma _{K_s  - 1:K} p\left( {\gamma _{K_s  - 1:K} } \right)\exp \left( {\lambda _2 \gamma _{K_s  - 1:K} } \right)\frac{1}{{\left( {K_s  - 2} \right)!}}\left[ {e\left( {\gamma _{K_s  - 1:K} ,\lambda _1 } \right)} \right]^{\left( {K_s  - 2} \right)} }.
\end{eqnarray} \normalsize
\item[d.] $m=K_s$
\\ Let $Z_1  = \gamma _{K_s :K}$ and $Z_2  = \sum\limits_{n = 1}^{K_s  - 1} {\gamma _{n:K} }$, then
\small \begin{eqnarray} \label{eq:joint_MGF_GSC_5}
 MGF_Z \left( {\lambda _1 ,\lambda _2 } \right) &=& F\int\limits_0^\infty  {d\gamma _{K_s :K} p\left( {\gamma _{K_s :K} } \right)\exp \left( {\lambda _1 \gamma _{K_s :K} } \right)\left[ {c\left( {\gamma _{K_s :K} } \right)} \right]^{\left( {K - K_s } \right)} }  \nonumber
\\ 
 && \times \frac{1}{{\left( {K_s  - 1} \right)!}}\left[ {e\left( {\gamma _{K_s :K} ,\lambda _2 } \right)} \right]^{\left( {K_s  - 1} \right)} .
\end{eqnarray} \normalsize
\end{enumerate}
Starting from the MGF expressions given above, we apply inverse Laplace transforms in Appendix~\ref{AP:G} 
in order to derive the following joint PDFs 
\begin{enumerate}
\item[a.] $m=1$ 
\small \begin{eqnarray}  \label{eq:joint_PDF_GSC_3}
\!\!\!\!\!\!\!\!\!\!\!\! && \!\!\!\!\!\!\!\! p_Z \left( {z_1 ,z_2 ,z_3 } \right) = p_{\gamma _{1:K},\sum\limits_{n = 2}^{K_s  - 1} {\gamma _{n:K} },\gamma _{K_s :K}} \left( {z_1 ,z_2 ,z_3 } \right) \nonumber
\\
\!\!\!\!\!\!\!\!\!\!\!\! &=& \!\!\!\! \frac{F}{{\left( {K_s  - 2} \right)!}}p\left( {z_1 } \right)p\left( {z_3 } \right)\left[ {c\left( {z_3 } \right)} \right]^{\left( {K - K_s } \right)} U\left( {z_1  - z_3 } \right)\mathcal{L}_{S_2 }^{ - 1} \left\{ {\left[ {\mu \left( {z_3 ,z_1 , - S_2 } \right)} \right]^{\left( {K_s  - 2} \right)} } \right\} ,\nonumber
\\
&&\text{for } z_3<z_1, \; \left(K_s-2\right)z_3<z_2<\left(K_s-2\right)z_1.
\end{eqnarray} \normalsize
where $U\left(\cdot\right)$ is the unit step function.
\item[b.] $1<m<K_s-1$
\small \begin{eqnarray} \label{eq:joint_PDF_GSC_2} 
\!\!\!\!\!\!\!\!\!\!\!\!\!\!\!\! && \!\!\!\!\!\!\!\! p_Z \left( {z_1 ,z_2 ,z_3 ,z_4 } \right) =p_{\sum\limits_{n = 1}^{m - 1} {\gamma _{n:K} },\gamma _{m:K},\sum\limits_{n = m + 1}^{K_s  - 1} {\gamma _{n:K} },\gamma _{K_s :K}} \left( {z_1 ,z_2 ,z_3 ,z_4 } \right)\nonumber
\\
\!\!\!\!\!\!\!\!\!\!\!\!\!\!\!\! &=& \!\!\!\! \frac{F}{{\left( {K_s  - m - 1} \right)!\left( {m - 1} \right)!}}p\left( {z_2 } \right)p\left( {z_4 } \right)\left[ {c\left( {z_4 } \right)} \right]^{\left( {K - K_s } \right)} U\left( {z_2  - z_4 } \right) \nonumber
\\ 
\!\!\!\!\!\!\!\!\!\!\!\!\!\!\!\! && \!\!\!\! \times \mathcal{L}_{S_1 }^{ - 1} \left\{ {\left[ {e\left( {z_2 , - S_1 } \right)} \right]^{\left( {m - 1} \right)} } \right\}\mathcal{L}_{S_3 }^{ - 1} \left\{ {\left[ {\mu \left( {z_4 ,z_2 , - S_3 } \right)} \right]^{\left( {K_s  - m - 1} \right)} } \right\} , \nonumber
\\
&&\text{for } z_4<z_2,\; z_1>(m-1)z_2\;\text{and}\; \left(K_s-m-1\right)z_4<z_3<\left(K_s-m-1\right)z_2. 
\end{eqnarray} \normalsize
\item[c.] $m=K_s-1$
\small \begin{eqnarray}  \label{eq:joint_PDF_GSC_4}
\!\!\!\!\!\!\!\!\!\!\!\! && \!\!\!\! \!\!\!\! p_Z \left( {z_1 ,z_2 ,z_3 } \right) = p_{\sum\limits_{n = 1}^{K_s  - 2} {\gamma _{n:K} },\gamma _{K_s  - 1:K},\gamma _{K_s :K}} \left( {z_1 ,z_2 ,z_3 } \right)\nonumber
\\ 
\!\!\!\!\!\!\!\!\!\!\!\! &=& \!\!\!\! \frac{F}{{\left( {K_s  - 2} \right)!}}p\left( {z_2 } \right)p\left( {z_3 } \right)\left[ {c\left( {z_3 } \right)} \right]^{\left( {K - K_s } \right)} U\left( {z_2  - z_3 } \right)\mathcal{L}_{S_1 }^{ - 1} \left\{ {\left[ {e\left( {z_2 , - S_1 } \right)} \right]^{\left( {K_s  - 2} \right)} } \right\}, \nonumber
\\
&&\text{for } z_3<z_2, \; z_1>\left(K_s-2\right)z_2.
\end{eqnarray} \normalsize
\item[d.] $m=K_s$
\small \begin{eqnarray}  \label{eq:joint_PDF_GSC_5}
\!\!\!\!\!\!\!\!\!\!\!\!\!\!\!\!\! && \!\!\!\!\!\!\!\! p_Z \left( {z_1 ,z_2 } \right) =p_{\gamma _{K_s :K},\sum\limits_{n = 1}^{K_s  - 1} {\gamma _{n:K} }} \left( {z_1 ,z_2 } \right) \nonumber
\\ 
\!\!\!\!\!\!\!\!\!\!\!\!\!\!\!\!\! &=& \!\!\!\!\frac{F}{{\left( {K_s  - 1} \right)!}}p\left( {z_1 } \right)\left[ {c\left( {z_1 } \right)} \right]^{\left( {K - K_s } \right)} \mathcal{L}_{S_2 }^{ - 1} \left\{ {\left[ {e\left( {z_1 , - S_2 } \right)} \right]^{\left( {K_s  - 1} \right)} } \right\}, \nonumber
\\
&&\text{for } z_2 \ge \left(K_s-1\right)z_1. 
\end{eqnarray} \normalsize
\end{enumerate}
Finally, the joint PDF of $\gamma _{m:K}$ and ${\sum\limits_{\scriptstyle n = 1 \hfill \atop \scriptstyle n \ne m \hfill}^{K_s } {\gamma _{n:K} } }$ can be obtained.
\normalsize
\end{proof}
Note that (\ref{eq:joint_PDF_1}) and (\ref{eq:joint_PDF_2}) involve only finite integrations of joint PDFs. Therefore, while a generic closed-form expression is not possible, the desired joint PDF can be easily numerically evaluated with the help of integral tables~\cite{kn:Mathematical_handbook, kn:gradshteyn_6} or using standard mathematical packages such as Mathematica or Matlab etc.

\begin{theorem} (Joint PDF of $\sum\limits_{n = 1}^m {\gamma _{n:K} }$ and $\sum\limits_{n = m + 1}^{K_s } {\gamma _{n:K} }$)
\\
Let $X=\sum\limits_{n = 1}^m {\gamma _{n:K} }$ and $Y=\sum\limits_{n = m + 1}^{K_s } {\gamma _{n:K} }$, then we can simply obtain the joint PDF of $ Z=[X,Y]$ as
\small\begin{eqnarray}
 \!\!\!\! \!\!\!\!\!\!\!\!\!\!\!\! \!\!\!\!  && \!\!\!\!\!\!\!\! p_Z \left( {x,y} \right) = p_{\sum\limits_{n = 1}^m {\gamma _{n:K} },\sum\limits_{n = m + 1}^{K_s } {\gamma _{n:K} }} \left( {x,y} \right)\nonumber
\\
\!\!\!\! \!\!\!\!\!\!\!\!\!\!\!\! \!\!\!\!  &=& \!\!\!\!  \int_0^{\frac{y}{{K_s  - m}}} \!\! {\int_{\frac{y}{{K_s  - m}}}^{\frac{x}{m}} {p_{\sum\limits_{n = 1}^{m - 1} {\gamma _{n:K} } ,\gamma _{m:K} ,\sum\limits_{n = m + 1}^{K_s  - 1} {\gamma _{n:K} } ,\gamma _{K_s :K} } \!\! \left( {x - z_2 ,z_2 ,y - z_4 ,z_4 } \right)dz_2 } dz_4 }, \nonumber
\\
 \!\!\!\! \!\!\!\!\!\!\!\!\!\!\!\! \!\!\!\!  && \!\!\!\!\!\!\!\!\text{for } x>\frac{m}{K_s-m}y.
\end{eqnarray} \normalsize
\end{theorem}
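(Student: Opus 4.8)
The plan is to obtain $p_{X,Y}$ not from a fresh MGF computation but by marginalizing the four-dimensional joint PDF already derived in the preceding theorem (case~b, eq.~(\ref{eq:joint_PDF_GSC_2})). The key observation is that, with the grouping used there, $Z_1 = \sum_{n=1}^{m-1}\gamma_{n:K}$, $Z_2 = \gamma_{m:K}$, $Z_3 = \sum_{n=m+1}^{K_s-1}\gamma_{n:K}$, $Z_4 = \gamma_{K_s:K}$, the two target sums split linearly as $X = \sum_{n=1}^{m}\gamma_{n:K} = Z_1 + Z_2$ and $Y = \sum_{n=m+1}^{K_s}\gamma_{n:K} = Z_3 + Z_4$. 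Hence $(X,Y)$ is a linear image of $(Z_1,Z_2,Z_3,Z_4)$ in which $Z_2$ and $Z_4$ serve as auxiliary variables that must be integrated out. This mirrors exactly the strategy of Theorem~(PDF of $\sum_{n=1}^{K_s}\gamma_{n:K}$), where a lower-dimensional desired PDF is recovered from a higher-dimensional one by finite integration.

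Concretely, I would perform the unit-Jacobian change of variables $(Z_1,Z_2,Z_3,Z_4)\mapsto(X,Z_2,Y,Z_4)$ with $Z_1 = X - Z_2$ and $Z_3 = Y - Z_4$, so that $p_{X,Z_2,Y,Z_4}(x,z_2,y,z_4) = p_Z(x-z_2,z_2,y-z_4,z_4)$ with $p_Z$ as in (\ref{eq:joint_PDF_GSC_2}), and then integrate out $z_2$ and $z_4$ to get $p_{X,Y}(x,y) = \int\!\int p_Z(x-z_2,z_2,y-z_4,z_4)\,dz_2\,dz_4$. It then remains to pin down the limits from the ordering $\gamma_{1:K}\ge\cdots\ge\gamma_{m-1:K}\ge\gamma_{m:K}\ge\gamma_{m+1:K}\ge\cdots\ge\gamma_{K_s-1:K}\ge\gamma_{K_s:K}\ge 0$. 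Each of the $m-1$ largest RVs is $\ge\gamma_{m:K}=z_2$, so $Z_1\ge(m-1)z_2$, i.e.\ $x-z_2\ge(m-1)z_2$, giving $z_2\le x/m$; each of the $K_s-m-1$ middle RVs lies in $[z_4,z_2]$, so $(K_s-m-1)z_4\le Z_3\le(K_s-m-1)z_2$, i.e.\ $(K_s-m-1)z_4\le y-z_4$ and $y-z_4\le(K_s-m-1)z_2$. The first of these yields $z_4\le y/(K_s-m)$, and substituting this bound into the second forces $z_2 > y/(K_s-m)$. This reproduces the region $0\le z_4\le \frac{y}{K_s-m}$, $\frac{y}{K_s-m}\le z_2\le\frac{x}{m}$, which is nonempty exactly when $\frac{x}{m} > \frac{y}{K_s-m}$, i.e.\ $x > \frac{m}{K_s-m}y$, the stated condition.

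The part that needs the most care is the support bookkeeping: one must check that the simple box in $(z_2,z_4)$ given in the statement is legitimate even though the true support is strictly smaller (the genuine lower limit of $z_2$ is the $z_4$-dependent $\frac{y-z_4}{K_s-m-1}$). This is admissible because (\ref{eq:joint_PDF_GSC_2}) already carries the finer support constraints internally: the factor $\mathcal{L}_{S_3}^{-1}\{[\mu(z_4,z_2,-S_3)]^{(K_s-m-1)}\}$ is the PDF of a sum of $K_s-m-1$ i.i.d.\ RVs each confined to $[z_4,z_2]$, hence vanishes unless $z_3=y-z_4\in[(K_s-m-1)z_4,(K_s-m-1)z_2]$, and the unit-step factor $U(z_2-z_4)$ enforces $z_2>z_4$; so the integrand is automatically zero on the part of the box outside the true support, and no tighter outer limits are required. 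Finally I would note that the degenerate values $m=1$ and $m=K_s-1$ are handled the same way but starting from the three-dimensional joint PDFs (\ref{eq:joint_PDF_GSC_3}) and (\ref{eq:joint_PDF_GSC_4}) respectively (with the correspondingly reduced single finite integration), while $m=K_s$ is excluded since $Y$ would be empty. Beyond this support accounting, the argument is a routine marginalization with no new inverse Laplace transform.
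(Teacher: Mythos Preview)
Your proposal is correct and follows precisely the approach the paper intends: the paper omits the proof, but the theorem statement itself expresses $p_Z(x,y)$ as a double integral of the four-dimensional joint PDF $p_{\sum_{n=1}^{m-1}\gamma_{n:K},\gamma_{m:K},\sum_{n=m+1}^{K_s-1}\gamma_{n:K},\gamma_{K_s:K}}$ from (\ref{eq:joint_PDF_GSC_2}), exactly the marginalization via $X=Z_1+Z_2$, $Y=Z_3+Z_4$ that you carry out. Your support bookkeeping---in particular the observation that the box limits in $(z_2,z_4)$ slightly overshoot the true support but the integrand already vanishes there through its built-in unit-step and $\mu$-factors---is a useful clarification that the paper leaves implicit.
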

\begin{proof}Omitted.
\end{proof}
Note again that only the finite integrations of joint PDFs are involved.

\section{Closed-form expressions for exponential RV case}
The above novel generic results are quite general and apply to any RVs. We now illustrate in this section some results for the i.i.d. exponential RV special case, where the PDF and the CDF of $\gamma$ are given by 
\begin{equation} \label{eq:pdf_of_Rayleigh}
\small p  \left( \gamma  \right)= \frac{1}{{\bar \gamma }}\exp \left( { - \frac{\gamma }{{\bar \gamma }}} \right),\quad\quad \text{for $\gamma\ge 0$},
\end{equation}
and
\begin{equation}
\small P  \left( \gamma  \right) = 1 - \exp \left( { - \frac{\gamma }{{\bar \gamma }}} \right),\quad\quad \text{for $\gamma\ge 0$},
\end{equation}
respectively, where $\bar{\gamma}$ is the common average. 
Therefore, (\ref{eq:CDF_MGF_multiple}), (\ref{eq:EDF_MGF_multiple}) and (\ref{eq:IntervalMGF_multiple}) specialize to

\small \begin{eqnarray} 
 \left[ {e\left( {z_a , - S_i } \right)} \right]^m  &=& \frac{{\left[ {\exp \left( { - \left( {S_i  + \frac{1}{{\bar \gamma }}} \right)z_a } \right)} \right]^m }}{{\left( {\bar \gamma } \right)^m \left( {S_i  + \frac{1}{{\bar \gamma }}} \right)^m }} \label{eq:common_function_Rayleigh_1}
\\ 
 \left[ {c\left( {z_a , - S_i } \right)} \right]^m  &=& \sum\limits_{j = 0}^m {\frac{{\left( { - 1} \right)^j }}{{\left( {\bar \gamma } \right)^m }}\binom{ m}{ j} \left[ {\exp \left( { - \frac{{z_a }}{{\bar \gamma }}} \right)} \right]^j \frac{{\left[ {\exp \left( { - z_a S_i } \right)} \right]^j }}{{\left( {S_i  + \frac{1}{{\bar \gamma }}} \right)^m }}} \label{eq:common_function_Rayleigh_2}
\\ 
 \left[ {\mu\left( {z_a ,z_b , - S_i } \right)} \right]^m  &=& \sum\limits_{j = 0}^m \Bigg[{\frac{{\left( { - 1} \right)^j }}{{\left( {\bar \gamma } \right)^m }}\binom{ m}{ j} \exp \left( { - \frac{{\left( {m - j} \right)}}{{\bar \gamma }}z_a } \right)} \exp \left( { - \frac{j}{{\bar \gamma }}z_b } \right) \nonumber
\\ 
  && \times \frac{{\exp \left( { - \left( {\left( {m - j} \right)z_a  + jz_b } \right)S_i } \right)}}{{\left( {S_i  + \frac{1}{{\bar \gamma }}} \right)^m }}\Bigg]. \label{eq:common_function_Rayleigh_3}
\end{eqnarray} \normalsize
After substituting (\ref{eq:common_function_Rayleigh_1}), (\ref{eq:common_function_Rayleigh_2}) and (\ref{eq:common_function_Rayleigh_3}) into the derived expressions of the joint PDF of partial sums of ordered statistics presented in the previous sections, it is easy to derive the following closed-form expressions for the PDFs by applying the classical inverse Laplace transform pair given in (\ref{eq:inverse_LT_pair}) and the property given in (\ref{eq:inverse_LT_property}). While some of these results have been derived using the successive conditioning approach previously, we list them here for the sake of convenience and completeness. 
\begin{enumerate}
\item[1)] PDF of $\sum\limits_{n = 1}^K {\gamma _{n:K} }$:
\begin{equation} \label{eq:closed_form_1}
\small p_Z \left( {z_1 } \right) = \frac{{z_1^{K - 1} }}{{\left( {K - 1} \right)!\bar \gamma ^K }}\exp \left( { - \frac{{z_1 }}{{\bar \gamma }}} \right).
\end{equation}
\item[2)] Joint PDF of $\gamma _{m:K}$ and $\sum\limits_{\scriptstyle n = 1 \hfill \atop \scriptstyle n \ne m \hfill}^K {\gamma _{n:K} } $:
\small \begin{equation} \label{eq:closed_form_2}
\!\!\!\!\!\!\!\!\!\!\!\! p_Z \left( {z_1 ,z_2 } \right) = 
\begin{cases}
\!\! \frac{{K!}}{{\left( {K - 1} \right)!\left( {K - 2} \right)!\bar \gamma ^K }}\exp \left( { - \frac{{z_1  + z_2 }}{{\bar \gamma }}} \right)
\\
\!\! \times \sum\limits_{j = 0}^{K - 1} {\left( { - 1} \right)^j \binom{K-1}{j} \left[ {z_2  - j z_1 } \right]^{K - 2} U\left( {z_2  - j z_1 } \right)} &\text{for }m=1,
\\
\!\! \frac{{K!}}{{\left( {K - m} \right)!\left( {m - 1} \right)!\left( {K - 2} \right)!\bar \gamma ^K }}\exp \left( { - \frac{{z_1  + z_2 }}{{\bar \gamma }}} \right)
\\ 
\!\! \times \sum\limits_{j = 0}^{K - m} {\left( { - 1} \right)^j \binom{K-m}{j} \left[ {z_2  - \left( {m + j - 1} \right)z_1 } \right]^{K - 2}} 
\\
\!\! \times U\left( {z_2  - \left( {m + j - 1} \right)z_1 } \right) &\text{for }m\ge2.
\end{cases}
\end{equation} \normalsize

\item[3)] Joint PDF of $\sum\limits_{n = 1}^m {\gamma _{n:K} }$ and $\sum\limits_{n = m + 1}^K {\gamma _{n:K} }$:
\begin{equation} \label{eq:closed_form_3}
\small p_Z \left( {z_1 ,z_2 } \right) = 
\begin{cases}
\frac{{K!}}{{\left( {K - m} \right)!\left( {K - m - 1} \right)!\left( {m - 1} \right)!\left( {m - 2} \right)!\bar \gamma ^K }}\exp \left( { - \frac{{z_1  + z_2 }}{{\bar \gamma }}} \right)
\\
 \times \int\limits_0^\infty  {d\gamma _{m:K} \Bigg[\left[ {z_1  - m\gamma _{m:K} } \right]^{m - 2} U\left( {z_1  - m\gamma _{m:K} } \right)} 
\\
 \times \sum\limits_{j = 0}^{K - m} {\left( { - 1} \right)^j \binom{K - m}{j}\left[ {z_2  - j\gamma _{m:K} } \right]^{K - m - 1} U\left( {z_2  - j\gamma _{m:K} } \right)}\Bigg], \quad &m\ge2
\\
\frac{{K!}}{{\left( {K - 1} \right)!\left( {K - 2} \right)!\bar \gamma ^K }}\exp \left( { - \frac{{z_1  + z_2 }}{{\bar \gamma }}} \right)
\\
 \times \sum\limits_{j = 0}^{K - 1} {\left( { - 1} \right)^j \binom{K-1}{j}\left[ {z_2  - jz_1 } \right]^{K - 2} U\left( {z_2  - jz_1 } \right)}, \quad &m=1.
\end{cases}
\end{equation}
\item[4)] PDF of $\sum\limits_{n = 1}^{K_s } {\gamma _{n:K} }$:
\begin{equation} \label{eq:closed_form_4}
\small p_{Z'} \left( {x } \right) =
\begin{cases}
\frac{{K!}}{{\left( {K - K_s } \right)!\left( {K_s  - 1} \right)!\left( {K_s  - 2} \right)!\bar \gamma ^{K_s } }}\exp \left( { - \frac{{x }}{{\bar \gamma }}} \right) \\
\times \int_0^{\frac{x}{K_s}}  { {\left[ {1 - \exp \left( { - \frac{{z_2 }}{{\bar \gamma }}} \right)} \right]^{K - K_s } \left[ {x  - K_s z_2 } \right]^{K_s  - 2} }}dz_2 ,\quad & K_s  \ge 2\\
\frac{K}{{\bar \gamma }}\exp \left( { - \frac{{x }}{{\bar \gamma }}} \right)\left[ {1 - \exp \left( { - \frac{{x }}{{\bar \gamma }}} \right)} \right]^{K - 1} ,\quad &K_s  = 1
\end{cases}
\end{equation}
where
\begin{equation}
\small \left[ {1 - \exp \left( { - \frac{{a }}{{\bar \gamma }}} \right)} \right]^{m}  = \sum\limits_{i = 0}^{m} {\left( { - 1} \right)^i \binom{m}{j}\left[ {\exp \left( { - \frac{{a }}{{\bar \gamma }}} \right)} \right]^i } .
\end{equation}

\item[5)] Joint PDF of $\gamma _{m:K}$ and $\sum\limits_{\scriptstyle n = 1 \hfill \atop \scriptstyle n \ne m \hfill}^{K_s } {\gamma _{n:K} }$:
\begin{enumerate}
\item[a.] For $m=1$,
\small \begin{eqnarray} \label{eq:closed_form_5}
\!\!\!\!\!\!\!\!\!\!\!\! p_Z \left( {z_1 ,z_2 ,z_3 } \right)\!\! &=& \!\! \frac{{K!}}{{\left( {K - K_s } \right)!\left( {K_s  - 2} \right)!\left( {K_s  - 3} \right)!\bar \gamma ^{K_s } }}\exp \left( { - \frac{{z_1  + z_2  + z_3 }}{{\bar \gamma }}} \right) \nonumber
\\ 
\!\!\!\!&&\!\! \times \left[ {1 - \exp \left( { - \frac{{z_3 }}{{\bar \gamma }}} \right)} \right]^{K - K_s } U\left( {z_1  - z_3 } \right) \nonumber
\\ 
\!\!\!\! && \!\! \times \sum\limits_{j = 0}^{K_s  - 2} \Bigg[\left( { - 1} \right)^j \binom{K_s-2}{j}\left[ {z_2  - \left( {K_s  - 2 - j} \right)z_3  - jz_1 } \right]^{K_s  - 3} \nonumber
\\
\!\!\!\! && \!\! \times U\left( {z_2  - \left( {K_s  - 2 - j} \right)z_3  - jz_1 } \right)\Bigg],
\end{eqnarray} \normalsize
\item[b.] For $1<m<K_s-1$,
\small \begin{eqnarray} \label{eq:closed_form_6}
\!\!\!\!\!\!\!\!\!\!\!\! &&\!\!\!\!\!\!\!\! p_Z \left( {z_1 ,z_2 ,z_3 ,z_4 } \right) \nonumber
\\
\!\!\!\!\!\!\!\!\!\!\!\! &=& \!\! \frac{{K!}}{{\left( {K - K_s } \right)!\left( {K_s  - m - 1} \right)!\left( {K_s  - m - 2} \right)!\left( {m - 1} \right)!\left( {m - 2} \right)!\bar \gamma ^{K_s } }} \nonumber
\\ 
\!\!\!\!\!\!\!\!\!\!\!\! &&\!\! \times \exp \left( { - \frac{{z_1  + z_2  + z_3  + z_4 }}{{\bar \gamma }}} \right)\left[ {1 - \exp \left( { - \frac{{z_4 }}{{\bar \gamma }}} \right)} \right]^{K - K_s } \left[ {z_1  - \left( {m - 1} \right)z_2 } \right]^{m - 2} \nonumber
\\ 
\!\!\!\!\!\!\!\!\!\!\!\! &&\!\! \times \sum\limits_{j = 0}^{K_s  - m - 1} \Bigg[{\left( { - 1} \right)^j \binom{K_s  - m - 1}{j}\left[ {z_3  - \left( {K_s  - m - 1 - j} \right)z_4  - jz_2 } \right]^{K_s  - m - 2} }  \nonumber
\\ 
\!\!\!\!\!\!\!\!\!\!\!\! &&\!\! \times U\left( {z_2  - z_4 } \right)U\left( {z_1  - \left( {m - 1} \right)z_2 } \right)U\left( {z_3  - \left( {K_s  - m - 1 - j} \right)z_4  - jz_2 } \right)\Bigg].
\end{eqnarray} \normalsize
\item[c.] For $m=K_s-1$,
\small \begin{eqnarray} \label{eq:closed_form_7}
\!\!\!\!\!\!\!\!\!\!\!\! p_Z \left( {z_1 ,z_2 ,z_3 } \right) \!\! &=& \!\! \frac{{K!}}{{\left( {K - K_s } \right)!\left( {K_s  - 2} \right)!\left( {K_s  - 3} \right)!\bar \gamma ^{K_s } }}\exp \left( { - \frac{{z_1  + z_2  + z_3 }}{{\bar \gamma }}} \right) \nonumber
\\
\!\!\!\! && \!\!\!\! \times \left[ {1 - \exp \left( { - \frac{{z_3 }}{{\bar \gamma }}} \right)} \right]^{K - K_s } U\left( {z_2  - z_3 } \right)\left[ {z_1  - \left( {K_s  - 2} \right)z_2 } \right]^{K_s  - 3} \nonumber
\\
\!\!\!\! && \!\!\!\! \times U\left( {z_1  - \left( {K_s  - 2} \right)z_2 } \right).
\end{eqnarray} \normalsize
\item[d.] For $m=K_s$,
\small \begin{eqnarray} \label{eq:closed_form_8}
\!\!\!\!\!\!\!\!\!\!\!\!\!\!\!\!\!\!\!\!\!\!\!\!\!\!\!\!\!\!    p_Z \left( {z_1 ,z_2 } \right) \!\! &=& \!\! \frac{{K!}}{{\left( {K - K_s } \right)!\left( {K_s  - 1} \right)!\left( {K_s  - 2} \right)!\bar \gamma ^{K_s } }}\exp \left( { - \frac{{z_1  + z_2 }}{{\bar \gamma }}} \right) \nonumber
\\ 
\!\!\!\!\!\!\!\!\!\!\!\!\!\!\!\!\!\!\!\!\!\!\!\!\!\!\!\!\!\!   && \!\!\!\! \times \left[ {1 - \exp \left( { - \frac{{z_1 }}{{\bar \gamma }}} \right)} \right]^{K - K_s } \left[ {z_2  - \left( {K_s  - 1} \right)z_1 }\right]^{K_s  - 2} U\left( {z_2  - \left( {K_s  - 1} \right)z_1 } \right).
\end{eqnarray} \normalsize
\end{enumerate}
\end{enumerate}


\section{Applications}

The above derived joint PDFs of partial sums of ordered statistics can be applied to the performance analysis of various wireless communication systems.
In this section, we discuss two examples. 

\subsection{Example 1)}
In conventional parallel multiuser scheduling schemes, a particular scheduled user's signal is detected by correlating signals of all scheduled users at the receiver. Therefore, under these practical conditions, every scheduled user is interfering with every other scheduled user. These effect is called Multiple Access Interference (MAI) or Multiple User Interference (MUI). This MUI is a factor which limits the capacity and performance of multiuser systems and the level of interference becomes substantial as the number of scheduled users increases. To take into account the effect of MUI caused by the other scheduled users, the signal to interference plus noise ratio (SINR) which measures the ratio between the useful power and the amount of noise and interference generated by all the other scheduled users is used. The level of interference becomes substantial as the number of the scheduled users increases because the SINR can decrease considerably in these conditions. This decrease in SINR can lead to a certain reduction of the rate allocated to each scheduled user.
With the above motivation in mind, the impact of interference on the performance (throughput) of the scheduled users assuming a selection based parallel multiuser scheduling scheme is needed to investigate the total average sum rate capacity and the average spectral efficiency (ASE) based on the SINR of the scheduled users. The major difficulty in investigating this total average sum rate capacity and ASE resides in the determination of the statistics of the SINR of the m-th scheduled user. Based on our approach, we can derive these results and then be applied to obtain the total average sum rate capacity and the ASE.
Based on this system model, the statistics of the SINR of the m-th scheduled user can be derived using the joint PDF of SNR of $m$-th desired scheduled user and the sum of the SNRs of the interfering $(K_S - 1)$ scheduled users among total $K$ $(K > K_S)$ users. Application to our proposed approach and the correspondent performance results over was presented in~\cite{kn:sungsiknam2008}.

\subsection{Example 2)}
Minimum selection generalized selection combining (MS-GSC) is an adaptive diversity combining scheme \cite{kn:Reed_MS, kn:Gupta_MS, kn:MS_GSC}. The basic idea is to combine a minimum number of best diversity paths among $L$ ones such that the combiner output SNR is above a certain preselected threshold, denoted by $\gamma_T$. The performance analysis of MS-GSC scheme is challenging due to the ordering operation on diversity paths and adaptive combining operation.  In~\cite{kn:Gupta_MS}, the average symbol error rate (SER) of MS-GSC was calculated as the weighted sum of the conditional average SER given that $m$ paths are combined, with the weights being the probabilities of combining $m$ paths and the average SER of conventional generalized selection combining (GSC), which always combines $m$ best paths, used as the conditional average SER. However, because the receiver with MS-GSC may combine $m$ best paths only under the condition that the combined SNR of first $m-1$  best paths is below the output threshold, the statistics of combined SNR with MS-GSC given that $m$ paths are combined is different from that with corresponding conventional GSC. As such, the average SER result in~\cite{kn:Gupta_MS} should serve as an approximation. To obtain the exact statistics of the combiner output SNR with MS-GSC, we need the joint statistics of the $m$th largest RV and the partial sum of the $m - 1$ largest RVs~\cite{kn:MS_GSC}.  Specifically, the exact CDF of the combiner output SNR with MS-GSC involves the probability $\Pr[\sum_{j=1}^{m-1}\gamma_{j:L} < \gamma_T ~\&~ 
\gamma_T \le \sum_{j=1}^{m}\gamma_{j:L} < x]$, which can be expressed in terms of the joint PDF of $\gamma_{m:L}$ and $\sum_{j=1}^{m-1}\gamma_{j:L}$.
With our proposed analytical framework, this joint PDF can be obtained in a systematic fashion, with the generic expression for generalized fading environments given in Eq. (36). This new statistical result allows us accurately evaluate the performance of  MS-GSC scheme over general fading channels.

Additionally, this proposed method can be applied to the performance analysis of various wireless communication systems with diversity techniques such that the performance analysis of GSC RAKE receiver with self-interference and so on. 

%

\section*{Appendices}

\appendices
\section{Useful inverse Laplace transform pair and property} \label{AP:Inverse_LT}
\setcounter{section}{1}

It is easy to see from (\ref{eq:closed_form_5}), (\ref{eq:closed_form_6}), (\ref{eq:closed_form_7}) and (\ref{eq:closed_form_8}) that the derivations of the PDF from the MGF involve the classical inverse Laplace transform pair \cite{kn:Mathematical_handbook}
\begin{equation} \label{eq:inverse_LT_pair}
\small {\mathcal{L}_s}^{-1}\left\{ \left( \frac{1}{s+a}\right)^n\right\} = \frac{1}{\left( n-1\right)!}t^{n-1}e^{-at},\quad t\ge0,n=1,2,3,\ldots,
\end{equation}
and the Laplace transform property \cite{kn:Mathematical_handbook}
\begin{equation} \label{eq:inverse_LT_property}
\small {\mathcal{L}_s}^{-1} \left\{ e^{-as}F\left(s\right)\right\} = f\left(t-a\right)U\left(t-a\right), \quad a>0.
\end{equation}

\section{Derivation of $I_m$} \label{AP:A}

In this appendix, we derive Eq. (\ref{eq:CDF_MGF_multiple}).

Let us first consider the case $m=K$. Noting that $p\left( {\gamma _{K:K} } \right)\exp \left( {\lambda \gamma _{K:K} } \right) = c'\left( {\gamma _{K:K} ,\lambda } \right)$,  we can rewrite $I_K$ as
\small \begin{eqnarray} \label{eq:appendix_A_2}
 \int\limits_0^{\gamma _{K - 1:K} } {d\gamma _{K:K} p\left( {\gamma _{K:K} } \right)\exp \left( {\lambda \gamma _{K:K} } \right)}  &=& \int\limits_0^{\gamma _{K - 1:K} } {d\gamma _{K:K} c'\left( {\gamma _{K:K} ,\lambda } \right)}
\\ \nonumber
  &=& \left. {c\left( {\gamma _{K:K} ,\lambda } \right)} \right|_0^{\gamma _{K - 1:K} }  
\\ \nonumber
  &=& c\left( {\gamma _{K - 1:K} ,\lambda } \right).
\end{eqnarray} \normalsize

For the case of $m=K-1$, after applying integration by part and  (\ref{eq:appendix_A_2}), we have
\small \begin{eqnarray} \label{eq:appendix_A_3}
&& \!\!\!\!\!\!\!\! \int\limits_0^{\gamma _{K - 2:K} } {d\gamma _{K - 1:K} p\left( {\gamma _{K - 1:K} } \right)\exp \left( {\lambda \gamma _{K - 1:K} } \right)\int\limits_0^{\gamma _{K - 1:K} } {d\gamma _{K:K} p\left( {\gamma _{K:K} } \right)\exp \left( {\lambda \gamma _{K:K} } \right)} }  \nonumber
\\
&=& \!\!\!\! \int\limits_0^{\gamma _{K - 2:K} } {d\gamma _{K - 1:K} c'\left( {\gamma _{K - 1:K} ,\lambda } \right)} c\left( {\gamma _{K - 1:K} ,\lambda } \right) \nonumber
\\ 
&=& \!\!\!\! \left. {\left[ {c\left( {\gamma _{K - 1:K} ,\lambda } \right)} \right]^2 } \right|_0^{\gamma _{K - 2:K} }  - \int\limits_0^{\gamma _{K - 2:K} } {d\gamma _{K - 1:K} c\left( {\gamma _{K - 1:K} ,\lambda } \right)c'\left( {\gamma _{K - 1:K} ,\lambda } \right)}.
\end{eqnarray} \normalsize
After moving the integration part of the right hand side (RHS) to the left hand side (LHS) and some manipulation, we can show
\begin{equation} \label{eq:appendix_A_4}
\small \int\limits_0^{\gamma _{K - 2:K} } {d\gamma _{K - 1:K} c'\left( {\gamma _{K - 1:K} ,\lambda } \right)c\left( {\gamma _{K - 1:K} ,\lambda } \right)}  = \frac{1}{2}\left[ {c\left( {\gamma _{K - 2:K} ,\lambda } \right)} \right]^2.
\end{equation}
Using (\ref{eq:appendix_A_4}) in (\ref{eq:appendix_A_3}), (\ref{eq:appendix_A_3}) can be re-written as
\begin{equation} \label{eq:appendix_A_5}
\small \int\limits_0^{\gamma _{K - 2:K} } {d\gamma _{K - 1:K} p\left( {\gamma _{K - 1:K} } \right)\exp \left( {\lambda \gamma _{K - 1:K} } \right)\int\limits_0^{\gamma _{K - 1:K} } {d\gamma _{K:K} p\left( {\gamma _{K:K} } \right)\exp \left( {\lambda \gamma _{K:K} } \right)} }  = \frac{1}{2}\left[ {c\left( {\gamma _{K - 2:K} ,\lambda } \right)} \right]^2 .
\end{equation}

Similarly for the case of $m=K-2$, with the help of integration by part, we can obtain the following
\small \begin{eqnarray} \label{eq:appendix_A_6}
\!\!\!\!\!\!\!\!\!\!&&\!\!\!\!\!\!\!\!\!\!\!\!\!\!\! \int\limits_0^{\gamma _{K - 3:K} }\!\!\!\! {d\gamma _{K - 2:K} p\left( {\gamma _{K - 2:K} } \right)\exp \left( {\lambda \gamma _{K - 2:K} } \right) \!\!\!\! \int\limits_0^{\gamma _{K - 2:K} } \!\!\!\! {d\gamma _{K - 1:K} p\left( {\gamma _{K - 1:K} } \right)\exp \left( {\lambda \gamma _{K - 1:K} } \right)} \!\!\!\! \int\limits_0^{\gamma _{K - 1:K} } \!\!\!\! {d\gamma _{K:K} p\left( {\gamma _{K:K} } \right)\exp \left( {\lambda \gamma _{K:K} } \right)} } \nonumber
\\
&&= \int\limits_0^{\gamma _{K - 3:K} } {d\gamma _{K - 2:K} c'\left( {\gamma _{K - 2:K} ,\lambda } \right)\frac{1}{2}\left[ {c\left( {\gamma _{K - 2:K} ,\lambda } \right)} \right]^2 } \nonumber
\\ 
&&= \frac{1}{2}\left. {\left[ {c\left( {\gamma _{K - 2:K} ,\lambda } \right)} \right]^3 } \right|_0^{\gamma _{K - 3:K} }  - \int\limits_0^{\gamma _{K - 3:K} } {d\gamma _{K - 2:K} \left[ {c\left( {\gamma _{K - 2:K} ,\lambda } \right)} \right]^2 c'\left( {\gamma _{K - 2:K} ,\lambda } \right)}.
\end{eqnarray} \normalsize
After some manipulation and substitution, we have \small \begin{eqnarray}\label{eq:appendix_A_8}
\!\!\!\!\!\!\!\!\!\!&&\!\!\!\!\!\!\!\!\!\!\!\!\!\!\! \int\limits_0^{\gamma _{K - 3:K} } \!\!\!\! {d\gamma _{K - 2:K} p\left( {\gamma _{K - 2:K} } \right)\exp \left( {\lambda \gamma _{K - 2:K} } \right) \!\!\!\! \int\limits_0^{\gamma _{K - 2:K} } \!\!\!\! {d\gamma _{K - 1:K} p\left( {\gamma _{K - 1:K} } \right)\exp \left( {\lambda \gamma _{K - 1:K} } \right)} \!\!\!\! \int\limits_0^{\gamma _{K - 1:K} } \!\!\!\! {d\gamma _{K:K} p\left( {\gamma _{K:K} } \right)\exp \left( {\lambda \gamma _{K:K} } \right)} } \nonumber
\\
&&= \frac{1}{{3 \times 2}}\left[ {c\left( {\gamma _{K - 3:K} ,\lambda } \right)} \right]^3.
\end{eqnarray} \normalsize

This process can be generalized to arbitrary $m$, which leads to the result in Eq. (\ref{eq:CDF_MGF_multiple}).

\section{Derivation of $I''_{a,b}$} \label{AP:C}

In this appendix, we show the derivation of Eq.(\ref{eq:IntervalMGF_multiple}).

Let
\small \begin{eqnarray} \label{eq:appendix_C_1}
 I''_{a,b} \!\!\!\! &=& \!\!\!\! \int\limits_{\gamma _{b:K} }^{\gamma _{a:K} } {d\gamma _{b - 1:K} \;p\left( {\gamma _{b - 1:K} } \right)\exp \left( {\lambda \gamma _{b - 1:K} } \right)\int\limits_{\gamma _{b - 1:K} }^{\gamma _{a:K} } {d\gamma _{b - 2:K} p\left( {\gamma _{b - 2:K} } \right)\exp \left( {\lambda \gamma _{b - 2:K} } \right)} }  \nonumber
\\ 
 \!\!\!\!  && \!\!\!\! \times \!\!\!\! \int\limits_{\gamma _{b - 2:K} }^{\gamma _{a:K} } {d\gamma _{b - 3:K} p\left( {\gamma _{b - 3:K} } \right)\exp \left( {\lambda \gamma _{b - 3:K} } \right)}  \cdots \!\!\!\! \int\limits_{\gamma _{a + 2:K} }^{\gamma _{a:K} } {d\gamma _{a + 1:K} p\left( {\gamma _{a + 1:K} } \right)\exp \left( {\lambda \gamma _{a + 1:K} } \right)}. 
\end{eqnarray} \normalsize

Using similar manipulations to the ones used in the previous Appendices \ref{AP:A}, we can write
\begin{equation}
\small \int\limits_{\gamma _{a + 2:K} }^{\gamma _{a:K} } {d\gamma _{a + 1:K} p\left( {\gamma _{a + 1:K} } \right)\exp \left( {\lambda \gamma _{a + 1:K} } \right)}  = \mu \left( {\gamma _{a + 2:K} ,\gamma _{a:K} ,\lambda } \right),
\end{equation}
\begin{equation}
\small \int\limits_{\gamma _{a + 3:K} }^{\gamma _{a:K} } {d\gamma _{a + 2:K} p\left( {\gamma _{a + 2:K} } \right)\exp \left( {\lambda \gamma _{a + 2:K} } \right)\int\limits_{\gamma _{a + 2:K} }^{\gamma _{a:K} } {d\gamma _{a + 1:K} p\left( {\gamma _{a + 1:K} } \right)\exp \left( {\lambda \gamma _{a + 1:K} } \right)} }  = \frac{1}{2}\left[ {\mu \left( {\gamma _{a + 3:K} ,\gamma _{a:K} ,\lambda } \right)} \right]^2,
\end{equation}
\small
\begin{eqnarray}
\!\!\!\!\!\!\!\!\!\!&&\!\!\!\!\!\!\!\!\!\!\!\!\!\!\! \int\limits_{\gamma _{a + 4:K} }^{\gamma _{a:K} } \!\!\!\!\! {d\gamma _{a + 3:K} p\left( {\gamma _{a + 3:K} } \right)\exp \left( {\lambda \gamma _{a + 3:K} } \right)\!\!\!\!\! \int\limits_{\gamma _{a + 3:K} }^{\gamma _{a:K} } \!\!\!\!\! {d\gamma _{a + 2:K} p\left( {\gamma _{a + 2:K} } \right)\exp \left( {\lambda \gamma _{a + 2:K} } \right)} \!\!\!\!\! \int\limits_{\gamma _{a + 2:K} }^{\gamma _{a:K} } \!\!\!\!\! {d\gamma _{a + 1:K} p\left( {\gamma _{a + 1:K} } \right)\exp \left( {\lambda \gamma _{a + 1:K} } \right)} } \nonumber
\\
&&= \frac{1}{{3 \times 2}}\left[ {\mu \left( {\gamma _{a + 4:K} ,\gamma _{a:K} ,\lambda } \right)} \right]^3.
\end{eqnarray}\normalsize

Using these results, $I''_{a,b}$ can be found in closed-form as
\small \begin{eqnarray}
 I''_{a,b} \!\!\!\! &=& \!\!\!\! \int\limits_{\gamma _{b:K} }^{\gamma _{a:K} } {d\gamma _{b - 1:K} \;p\left( {\gamma _{b - 1:K} } \right)\exp \left( {\lambda \gamma _{b - 1:K} } \right)\int\limits_{\gamma _{b - 1:K} }^{\gamma _{a:K} } {d\gamma _{b - 2:K} p\left( {\gamma _{b - 2:K} } \right)\exp \left( {\lambda \gamma _{b - 2:K} } \right)} } \nonumber 
\\ 
\!\!\!\! && \!\!\!\!\times \!\!\!\! \int\limits_{\gamma _{b - 2:K} }^{\gamma _{a:K} } {d\gamma _{b - 3:K} p\left( {\gamma _{b - 3:K} } \right)\exp \left( {\lambda \gamma _{b - 3:K} } \right)}  \cdots  \!\!\!\! \int\limits_{\gamma _{a + 2:K} }^{\gamma _{a:K} } {d\gamma _{a + 1:K} p\left( {\gamma _{a + 1:K} } \right)\exp \left( {\lambda \gamma _{a + 1:K} } \right)} \nonumber
\\ 
\!\!\!\! &=& \!\!\!\! \frac{1}{{\left( {b - a - 1} \right)!}}\left[ {\mu \left( {\gamma _{b:K} ,\gamma _{a:K} ,\lambda } \right)} \right]^{\left( {b - a - 1} \right)}.
\end{eqnarray} \normalsize

\section{Derivation of (\ref{eq:joint_MGF_1})}\label{AP:D}

Starting with (\ref{eq:joint_MGF_1}), by simply applying (\ref{eq:CDF_MGF_multiple}), we can obtain the following result easily
\begin{equation} \label{eq:joint_MGF_3_integralform} \small
\!\!\!\!\!\!\!\!\int\limits_0^{\gamma _{m:K} } \!\!\! {d\gamma _{m + 1:K} p\left( {\gamma _{m + 1:K} } \right)\exp \left( {\lambda _2 \gamma _{m + 1:K} } \right)}  \cdots \!\!\!\! \int\limits_0^{\gamma _{K - 1:K} } \!\!\! {d\gamma _{K:K} p\left( {\gamma _{K:K} } \right)\exp \left( {\lambda _2 \gamma _{K:K} } \right)} =\!\! \frac{1}{{\left( {K - m} \right)!}}\left[ {c\left( {\gamma _{m:K} ,\lambda _2 } \right)} \right]^{\left( {K - m} \right)}.
\end{equation} \normalsize
By inserting (\ref{eq:joint_MGF_3_integralform}) into (\ref{eq:joint_MGF_2_integralform}), the MGF has the following form:
\small
\begin{eqnarray} \label{eq:joint_MGF_4_integralform}
\!\!\!\! MGF_Z \left( {\lambda _1 ,\lambda _2 } \right) \!\!&=& \!\!  F\int\limits_0^\infty \!\! {d\gamma _{1:K} p\left( {\gamma _{1:K} } \right)\exp \left( {\lambda _2 \gamma _{1:K} } \right) \!\!\!\! \int\limits_0^{\gamma _{1:K} }\!\!\!\! {d\gamma _{2:K} p\left( {\gamma _{2:K} } \right)\exp \left( {\lambda _2 \gamma _{2:K} } \right)} } \nonumber
\\
&&\times \cdots \times \!\! \int\limits_0^{\gamma _{m - 2:K} } \!\!\!\! {d\gamma _{m - 1:K} p\left( {\gamma _{m - 1:K} } \right)\exp \left( {\lambda _2 \gamma _{m - 1:K} } \right)}   \nonumber
\\ 
 \!\!\!\!\!\!\!\!\!\!\!\! &&\times \!\! \int\limits_0^{\gamma _{m - 1:K} } \!\!\!\! {d\gamma _{m:K} p\left( {\gamma _{m:K} } \right)\exp \left( {\lambda _1 \gamma _{m:K} } \right)\frac{1}{{\left( {K - m} \right)!}}\left[ {c\left( {\gamma _{m:K} ,\lambda _2 } \right)} \right]^{\left( {K - m} \right)} }. 
\end{eqnarray} \normalsize
By applying the integral solution presented in (\ref{eq:Integral_solution}), we can re-write (\ref{eq:joint_MGF_4_integralform}) as the following:
\small
\begin{eqnarray} \label{eq:joint_MGF_5_integralform}
 \!\!\!\!\!\!\!\!\!\!\!\! MGF_Z \left( {\lambda _1 ,\lambda _2 } \right) \!\!&=& \!\! F\int\limits_0^\infty \!\! {d\gamma _{m:K} p\left( {\gamma _{m:K} } \right)\exp \left( {\lambda _1 \gamma _{m:K} } \right)\frac{1}{{\left( {K - m} \right)!}}\left[ {c\left( {\gamma _{m:K} ,\lambda _2 } \right)} \right]^{\left( {K - m} \right)} } \nonumber
\\ 
 \!\!\!\!\!\!\!\!\!\!\!\! &&\times \!\! \int\limits_{\gamma _{m:K} }^\infty \!\!\!\! {d\gamma _{m - 1:K} p\left( {\gamma _{m - 1:K} } \right)\exp \left( {\lambda _2 \gamma _{m - 1:K} } \right) \!\!\!\! \int\limits_{\gamma _{m - 1:K} }^\infty \!\!\!\! {d\gamma _{m - 2:K} p\left( {\gamma _{m - 2:K} } \right)\exp \left( {\lambda _2 \gamma _{m - 2:K} } \right)}} \nonumber
\\
&& \times \cdots \times \!\!\!\! \int\limits_{\gamma _{2:K} }^\infty \!\!\!\! {d\gamma _{1:K} p\left( {\gamma _{1:K} } \right)\exp \left( {\lambda _2 \gamma _{1:K} } \right)}.
\end{eqnarray} \normalsize
By simply applying (\ref{eq:EDF_MGF_multiple}), we can obtain the following result easily
\small \begin{eqnarray} \label{eq:joint_MGF_6_integralform}
 \!\!\!\!\!\!\!\!\!\!\!\! && \!\!\!\!\!\!\!\!\int\limits_{\gamma _{m:K} }^\infty \!\!\!\! {d\gamma _{m - 1:K} p\left( {\gamma _{m - 1:K} } \right)\exp \left( {\lambda _2 \gamma _{m - 1:K} } \right) \!\!\!\! \int\limits_{\gamma _{m - 1:K} }^\infty \!\!\!\! {d\gamma _{m - 2:K} p\left( {\gamma _{m - 2:K} } \right)\exp \left( {\lambda _2 \gamma _{m - 2:K} } \right)} } \nonumber 
\\
&& \times \cdots \times \!\!\!\! \int\limits_{\gamma _{2:K} }^\infty \!\!\!\! {d\gamma _{1:K} p\left( {\gamma _{1:K} } \right)\exp \left( {\lambda _2 \gamma _{1:K} } \right)}  \nonumber
\\
 \!\!\!\!\!\!\!\!\!\!\!\!&=& \!\! \frac{1}{{\left( {m - 1} \right)!}}\left[ {e\left( {\gamma _{m:K} ,\lambda _2 } \right)} \right]^{\left( {m - 1} \right)} 
\end{eqnarray} \normalsize
By inserting (\ref{eq:joint_MGF_6_integralform}) into (\ref{eq:joint_MGF_5_integralform}), we can obtain the second order MGF of $Z_1  = \gamma _{m:K}$ and $Z_2  = \sum\limits_{\scriptstyle n = 1 \hfill \atop \scriptstyle n \ne m \hfill}^K {\gamma _{n:K} }$ easily as the following:
\small\begin{equation} \label{eq:App_joint_MGF_1} 
 \!\!\!\!\!\!\!\!\!\! MGF_Z \left( {\lambda _1 ,\lambda _2 } \right) \!\!\!= \!\!\frac{F}{{\left( {K - m} \right)!\left( {m - 1} \right)!}}\int\limits_0^\infty  {d\gamma _{m:K} p\left( {\gamma _{m:K} } \right)\exp \left( {\lambda _1 \gamma _{m:K} } \right)\left[ {c\left( {\gamma _{m:K} ,\lambda _2 } \right)} \right]^{\left( {K - m} \right)} } \left[ {e\left( {\gamma _{m:K} ,\lambda _2 } \right)} \right]^{\left( {m - 1} \right)}.
\end{equation}\normalsize

\section{Derivation of (\ref{eq:joint_MGF_2})}\label{AP:E}

Starting with (\ref{eq:joint_MGF_2}), by simply applying (\ref{eq:CDF_MGF_multiple}), we can obtain the following result easily
\small \begin{eqnarray} \label{eq:joint_MGF_8_integralform}
\!\!\!\!\!\!\!\!&&\!\!\!\!\!\!\!\! \int\limits_0^{\gamma _{m:K} } {d\gamma _{m + 1:K} p\left( {\gamma _{m + 1:K} } \right)\exp \left( {\lambda _2 \gamma _{m + 1:K} } \right)}  \cdots \!\!\!\! \int\limits_0^{\gamma _{K - 1:K} } {d\gamma _{K:K} p\left( {\gamma _{K:K} } \right)\exp \left( {\lambda _2 \gamma _{K:K} } \right)} \nonumber
\\
&=&\!\! \frac{1}{{\left( {K - m} \right)!}}\left[ {c\left( {\gamma _{m:K} ,\lambda _2 } \right)} \right]^{\left( {K - m} \right)}.
\end{eqnarray} \normalsize
By inserting (\ref{eq:joint_MGF_8_integralform}) into (\ref{eq:joint_MGF_7_integralform}), the MGF has the following form:
\small \begin{eqnarray} \label{eq:joint_MGF_9_integralform}
\!\!\!\!\!\!\!\!\!\!\!\!\!\! MGF_Z \left( {\lambda _1 ,\lambda _2 } \right) \!\!&=& \!\! F  \!\! \int\limits_0^\infty  \!\! {d\gamma _{1:K} p\left( {\gamma _{1:K} } \right)\exp \left( {\lambda _1 \gamma _{1:K} } \right) }\nonumber
\\
&& \times \cdots \times \!\!\!\! \!\!\!\int\limits_0^{\gamma _{m - 1:K} }  \!\!\!\! {d\gamma _{m:K} p\left( {\gamma _{m:K} } \right)\exp \left( {\lambda _1 \gamma _{m:K} } \right)\frac{1}{{\left( {K - m} \right)!}}\left[ {c\left( {\gamma _{m:K} ,\lambda _2 } \right)} \right]^{\left( {K - m} \right)} }.
\end{eqnarray} \normalsize
By applying the integral solution presented in (\ref{eq:Integral_solution}), we can re-write (\ref{eq:joint_MGF_9_integralform}) as the following:
\small \begin{eqnarray} \label{eq:joint_MGF_10_integralform}
\!\!\!\!\!\!\!\!\!\!\!\! MGF_Z \left( {\lambda _1 ,\lambda _2 } \right) \!\!&=& \!\! F\int\limits_0^\infty  {d\gamma _{m:K} p\left( {\gamma _{m:K} } \right)\exp \left( {\lambda _1 \gamma _{m:K} } \right)\frac{1}{{\left( {K - m} \right)!}}\left[ {c\left( {\gamma _{m:K} ,\lambda _2 } \right)} \right]^{\left( {K - m} \right)} }  \nonumber
\\ 
 \!\!\!\!\!\!\!\!\!\!\!\! &&\times \!\! \int\limits_{\gamma _{m:K} }^\infty  {d\gamma _{m - 1:K} p\left( {\gamma _{m - 1:K} } \right)\exp \left( {\lambda _1 \gamma _{m - 1:K} } \right) \cdots \!\!\!\! \int\limits_{\gamma _{2:K} }^\infty  {d\gamma _{1:K} p\left( {\gamma _{1:K} } \right)\exp \left( {\lambda _1 \gamma _{1:K} } \right)} }. 
\end{eqnarray} \normalsize
By simply applying (\ref{eq:EDF_MGF_multiple}), we can obtain the following result easily
\small \begin{eqnarray} \label{eq:joint_MGF_11_integralform}
&&\!\!\!\!\!\! \int\limits_{\gamma _{m:K} }^\infty  {d\gamma _{m - 1:K} p\left( {\gamma _{m - 1:K} } \right)\exp \left( {\lambda _1 \gamma _{m - 1:K} } \right) \cdots \!\!\!\! \int\limits_{\gamma _{2:K} }^\infty  {d\gamma _{1:K} p\left( {\gamma _{1:K} } \right)\exp \left( {\lambda _1 \gamma _{1:K} } \right)} } \nonumber
\\
&=& \frac{1}{{\left( {m - 1} \right)!}}\left[ {e\left( {\gamma _{m:K} ,\lambda _1 } \right)} \right]^{\left( {m - 1} \right)} .
\end{eqnarray} \normalsize
Substituting (\ref{eq:joint_MGF_11_integralform}) in (\ref{eq:joint_MGF_10_integralform}), we can obtain the second order MGF of $Z_1  = \sum\limits_{n = 1}^m {\gamma _{n:K} }$ and $Z_2  = \sum\limits_{n = m + 1}^K {\gamma _{n:K} }$ as
\small
\begin{eqnarray} \label{eq:APP_joint_MGF_2}
\!\!\!\!\!\!\!\!\!\!\!\!\!\!\!\!\!\!\!\!\! && \!\!\!\! MGF_Z \left( {\lambda _1 ,\lambda _2 } \right) \nonumber
\\
\!\!\!\!\!\!\!\!\!\!\!\!\!\!\!\!\!\!\!\!\! &=& \!\! \frac{{K!}}{{\left( {K - m} \right)!\left( {m - 1} \right)!}}\int\limits_0^\infty  {d\gamma _{m:K} p\left( {\gamma _{m:K} } \right)\exp \left( {\lambda _1 \gamma _{m:K} } \right)\left[ {c\left( {\gamma _{m:K} ,\lambda _2 } \right)} \right]^{\left( {K - m} \right)} \left[ {e\left( {\gamma _{m:K} ,\lambda _1 } \right)} \right]^{\left( {m - 1} \right)} }.
\end{eqnarray} \normalsize

\section{Derivation of the joint PDF of $\gamma _{m:K}$ and $\sum\limits_{\scriptstyle n = 1 \hfill \atop \scriptstyle n \ne m \hfill}^{K_s } {\gamma _{n:K} }$ among $K$ ordered RVs}\label{AP:G}

In this Appendix, we derive the joint PDF of $\gamma _{m:K}$ and $\sum\limits_{\scriptstyle n = 1 \hfill \atop \scriptstyle n \ne m \hfill}^{K_s } {\gamma _{n:K} }$ among $K$ ordered RVs by considering four cases i) $m=1$, ii) $1<m<K_s-1$, iii) $m=K_s-1$ and iv) $m=K_s$ separately.

Consider first the case ii), $1<m<K_s-1$. Let $Z_1  = \sum\limits_{n = 1}^{m - 1} {\gamma _{n:K} }$, $Z_2  = \gamma _{m:K}$, $Z_3 = \sum\limits_{n = m + 1}^{K_s  - 1} {\gamma _{n:K} }$ and $Z_4  = \gamma _{K_s :K}$. The 4-dimensional MGF of $Z=\left[Z_1,Z_2,Z_3,Z_4\right]$ is given by the expectation
\small \begin{eqnarray} \label{eq:APP_joint_MGF_GSC_2_integralform}
\!\!\!\!\!\!\!\!\!\!\!\! && \!\!\!\!\!\!\!\! MGF_Z \left( {\lambda _1 ,\lambda _2 ,\lambda _3 ,\lambda _4 } \right) \!\! = \!\! E\left\{ {\exp \left( {\lambda _1 Z_1  + \lambda _2 Z_2  + \lambda _3 Z_3  + \lambda _4 Z_4 } \right)} \right\} \nonumber
\\ 
\!\!\!\!\!\!\!\!\!\!\!\!  &=& \!\!\!\! F\int\limits_0^\infty  {d\gamma _{1:K} p\left( {\gamma _{1:K} } \right)\exp \left( {\lambda _1 \gamma _{1:K} } \right) \cdots \!\!\!\! \int\limits_0^{\gamma _{m - 2:K} } {d\gamma _{m - 1:K} p\left( {\gamma _{m - 1:K} } \right)\exp \left( {\lambda _1 \gamma _{m - 1:K} } \right)} }  \nonumber
\\ 
\!\!\!\!\!\!\!\!\!\!\!\!  && \!\!\!\! \times \!\!\!\!\! \int\limits_0^{\gamma _{m - 1:K} } {d\gamma _{m:K} p\left( {\gamma _{m:K} } \right)\exp \left( {\lambda _2 \gamma _{m:K} } \right)} \nonumber
\\ 
\!\!\!\!\!\!\!\!\!\!\!\!  && \!\!\!\! \times \!\! \int\limits_0^{\gamma_{m:K}}  {d\gamma _{m + 1:K} p\left( {\gamma _{m + 1:K} } \right)\exp \left( {\lambda _3 \gamma _{m + 1:K} } \right) \cdots \!\!\!\! \int\limits_0^{\gamma _{K_s  - 2:K} } {d\gamma _{K_s  - 1:K} p\left( {\gamma _{K_s  - 1:K} } \right)\exp \left( {\lambda _3 \gamma _{K_s  - 1:K} } \right)} } \nonumber
\\ 
\!\!\!\!\!\!\!\!\!\!\!\!  && \!\!\!\! \times \!\!\!\!\! \int\limits_0^{\gamma _{K_s  - 1:K} } {d\gamma _{K_s :K} p\left( {\gamma _{K_s :K} } \right)\exp \left( {\lambda _4 \gamma _{K_s :K} } \right)\left[ {c\left( {\gamma _{K_s :K} } \right)} \right]^{\left( {K - K_s } \right)} }. 
\end{eqnarray} \normalsize
With the help of (\ref{eq:Integral_solution}), (\ref{eq:CDF_MGF_multiple}), (\ref{eq:EDF_MGF_multiple}) and (\ref{eq:IntervalMGF_multiple}), we can easily obtain the 4-dimensional MGF of $Z_1  = \sum\limits_{n = 1}^{m - 1} {\gamma _{n:K} }$, $Z_2  = \gamma _{m:K}$, $Z_3 = \sum\limits_{n = m + 1}^{K_s  - 1} {\gamma _{n:K} }$ and $Z_4  = \gamma _{K_s :K}$ as
\small \begin{eqnarray} \label{eq:APP_joint_MGF_GSC_2}
&&\!\!\!\!\!\!\!\! MGF_Z \left( {\lambda _1 ,\lambda _2 ,\lambda _3 ,\lambda _4 } \right) \nonumber
\\
&=& \!\!\!\! \frac{F}{{\left( {K_s  - m - 1} \right)!\left( {m - 1} \right)!}}\int\limits_0^\infty  {d\gamma _{K_s :K} p\left( {\gamma _{K_s :K} } \right)\exp \left( {\lambda _4 \gamma _{K_s :K} } \right)\left[ {c\left( {\gamma _{K_s :K} } \right)} \right]^{\left( {K - K_s } \right)} } \nonumber
\\
&& \!\!\!\! \times \!\!\!\! \int\limits_{\gamma _{K_s :K} }^\infty  {d\gamma _{m:K} p\left( {\gamma _{m:K} } \right)\exp \left( {\lambda _2 \gamma _{m:K} } \right)\left[ {e\left( {\gamma _{m:K} ,\lambda _1 } \right)} \right]^{\left( {m - 1} \right)} \left[ {\mu \left( {\gamma _{K_s :K} ,\gamma _{m:K} ,\lambda _3 } \right)} \right]^{\left( {K_s  - m - 1} \right)} }.
\end{eqnarray} \normalsize

Having a MGF expression given in (\ref{eq:APP_joint_MGF_GSC_2}), we are now in the position to derive the 4-dimensional joint PDF of $Z_1  = \sum\limits_{n = 1}^{m - 1} {\gamma _{n:K} }$, $Z_2  = \gamma _{m:K}$, $Z_3 = \sum\limits_{n = m + 1}^{K_s  - 1} {\gamma _{n:K} }$ and $Z_4  = \gamma _{K_s :K}$. Letting $\lambda _1  =  - S_1$, $\lambda _2  =  - S_2$, $\lambda _3  =  - S_3$,  and $\lambda _4  =  - S_4$  we can derive the 4-dimensional PDF of $Z_1  = \sum\limits_{n = 1}^{m - 1} {\gamma _{n:K} }$, $Z_2  = \gamma _{m:K}$, $Z_3 = \sum\limits_{n = m + 1}^{K_s  - 1} {\gamma _{n:K} }$ and $Z_4  = \gamma _{K_s :K}$ by applying an inverse Laplace transform yielding
\small \begin{eqnarray} \label{eq:APP_joint_PDF_GSC_2}
\!\!\!\!\!\!\!\!\!\!\!\!\!\!\!\! && \!\!\!\!\!\!\!\! p_Z \left( {z_1 ,z_2 ,z_3 ,z_4 } \right) = \mathcal{L}_{S_1 ,S_2 ,S_3 ,S_4 }^{ - 1} \left\{ {MGF_Z \left( { - S_1 , - S_2 , - S_3 , - S_4 } \right)} \right\} \nonumber
\\ 
\!\!\!\!\!\!\!\!\!\!\!\!\!\!\!\!  &=& \!\!\!\! \frac{F}{{\left( {K_s  - m - 1} \right)!\left( {m - 1} \right)!}}\int\limits_0^\infty  {d\gamma _{K_s :K} p\left( {\gamma _{K_s :K} } \right)\mathcal{L}_{S_4 }^{ - 1} \left\{ {\exp \left( { - S_4 \gamma _{K_s :K} } \right)} \right\})\left[ {c\left( {\gamma _{K_s :K} } \right)} \right]^{\left( {K - K_s } \right)} }  \nonumber
\\ 
\!\!\!\!\!\!\!\!\!\!\!\!\!\!\!\!  && \!\!\!\! \times \!\!\!\! \int\limits_{\gamma _{K_s :K} }^\infty \!\!\!\! {d\gamma _{m:K} \Bigg[p\left( {\gamma _{m:K} } \right)\mathcal{L}_{S_2 }^{ - 1} \left\{ {\exp \left( { - S_2 \gamma _{m:K} } \right)} \right\}\mathcal{L}_{S_1 }^{ - 1} \left\{ {\left[ {e\left( {\gamma _{m:K} , - S_1 } \right)} \right]^{\left( {m - 1} \right)} } \right\}} \nonumber
\\
\!\!\!\!\!\!\!\!\!\!\!\!\!\!\!\!  && \!\!\!\! \times \mathcal{L}_{S_3 }^{ - 1} \left\{ {\left[ {\mu \left( {\gamma _{K_s :K} ,\gamma _{m:K} , - S_3 } \right)} \right]^{\left( {K_s  - m - 1} \right)} } \right\}\Bigg]  \nonumber
\\ 
\!\!\!\!\!\!\!\!\!\!\!\!\!\!\!\!  &=& \!\!\!\! \frac{F}{{\left( {K_s  - m - 1} \right)!\left( {m - 1} \right)!}}p\left( {z_2 } \right)p\left( {z_4 } \right)\left[ {c\left( {z_4 } \right)} \right]^{\left( {K - K_s } \right)} U\left( {z_2  - z_4 } \right) \nonumber
\\ 
\!\!\!\!\!\!\!\!\!\!\!\!\!\!\!\!  && \!\!\!\! \times \mathcal{L}_{S_1 }^{ - 1} \left\{ {\left[ {e\left( {z_2 , - S_1 } \right)} \right]^{\left( {m - 1} \right)} } \right\}\mathcal{L}_{S_3 }^{ - 1} \left\{ {\left[ {\mu \left( {z_4 ,z_2 , - S_3 } \right)} \right]^{\left( {K_s  - m - 1} \right)} } \right\}. 
\end{eqnarray} \normalsize

With this 4-dimensional joint PDF, letting $X=Z_2$ and $Y=Z_1+Z_3+Z_4$ we can obtain the 2-dimensional joint PDF of $Z^{'}=[X,Y]$ by integrating over $z_1$ and $z_4$ yielding
\begin{equation} \label{eq:AP_final_1}
\small p_{Z^{'}} \left( {x,y} \right) = \int_0^x {\int_{\left( {m - 1} \right)x}^{y - \left(K_s-m\right)z_4 } {p_Z \left( {z_1 ,x,y - z_4 ,z_4 } \right)dz_1 } dz_4 }, 
\end{equation}
or equivalently we can obtain the 2-dimensional joint PDF of $Z^{'}=[X,Y]$ by integrating over $z_3$ and $z_4$ giving
\begin{equation} \label{eq:AP_final_2}
\small p_{Z^{'}} \left( {x,y} \right) = \int_0^x {\int_{\left( {K_s  - m - 1} \right)z_4 }^{\left( {K_s  - m - 1} \right)x} {p_Z \left( {y - z_3  - z_4 ,x,z_3 ,z_4 } \right)dz_3 } dz_4 } .
\end{equation}

We now consider the case i) for which $m=1$. Let $Z_1  = \gamma _{1:K}$, $Z_2  = \sum\limits_{n = 2}^{K_s  - 1} {\gamma _{n:K} }$ and $Z_3  = \gamma _{K_s :K}
$ for convenience. For this case, the 3-dimensional MGF of $Z=\left[Z_1,Z_2,Z_3\right]$ is given by the expectation
\small \begin{eqnarray} \label{eq:APP_joint_MGF_GSC_3_integralform}
\!\!\!\!\!\!\!\!\!\!\!\! && \!\!\!\!\!\!\!\! MGF_Z \left( {\lambda _1 ,\lambda _2 ,\lambda _3 } \right) = E\left\{ {\exp \left( {\lambda _1 Z_1  + \lambda _2 Z_2  + \lambda _3 Z_3 } \right)} \right\} \nonumber
\\ 
\!\!\!\!\!\!\!\!\!\!\!\! &=& \!\!\!\! F\int\limits_0^\infty  {d\gamma _{1:K} p\left( {\gamma _{1:K} } \right)\exp \left( {\lambda _1 \gamma _{1:K} } \right)}  \nonumber
\\ 
\!\!\!\!\!\!\!\!\!\!\!\! && \!\!\!\!  \times \!\!\!\! \int\limits_0^{\gamma _{1:K} } {d\gamma _{2:K} p\left( {\gamma _{2:K} } \right)\exp \left( {\lambda _2 \gamma _{2:K} } \right) \cdots \!\!\!\! \int\limits_0^{\gamma _{K_s  - 2:K} } {d\gamma _{K_s  - 1:K} p\left( {\gamma _{K_s  - 1:K} } \right)\exp \left( {\lambda _2 \gamma _{K_s  - 1:K} } \right)} }  \nonumber
\\ 
\!\!\!\!\!\!\!\!\!\!\!\! && \!\!\!\!  \times \!\!\!\! \int\limits_0^{\gamma _{K_s  - 1:K} } {d\gamma _{K_s :K} p\left( {\gamma _{K_s :K} } \right)\exp \left( {\lambda _3 \gamma _{K_s :K} } \right)\left[ {c\left( {\gamma _{K_s :K} } \right)} \right]^{\left( {K - K_s } \right)} }.
\end{eqnarray} \normalsize
With the help of (\ref{eq:Integral_solution}) and (\ref{eq:IntervalMGF_multiple}), we can easily obtain the 3-dimensional MGF of $Z_1  = \gamma _{1:K}$, $Z_2  = \sum\limits_{n = 2}^{K_s  - 1} {\gamma _{n:K} }$ and $Z_3  = \gamma _{K_s :K}$ as
\small \begin{eqnarray} \label{eq:APP_joint_MGF_GSC_3}
\!\!\!\!\!\!\!\!\!\!\!\! MGF_Z \left( {\lambda _1 ,\lambda _2 ,\lambda _3 } \right)\!\! &=& \!\!\!\! F\int\limits_0^\infty  {d\gamma _{K_s :K} p\left( {\gamma _{K_s :K} } \right)\exp \left( {\lambda _3 \gamma _{K_s :K} } \right)\left[ {c\left( {\gamma _{K_s :K} } \right)} \right]^{\left( {K - K_s } \right)} }  \nonumber
\\ 
\!\!\!\!\!\!\!\!\!\!\!\! && \!\!\!\!  \times \!\!\!\! \int\limits_{\gamma _{K_s :K} }^\infty  {d\gamma _{1:K} p\left( {\gamma _{1:K} } \right)\exp \left( {\lambda _1 \gamma _{1:K} } \right)\frac{1}{{\left( {K_s  - 2} \right)!}}\left[ {\mu \left( {\gamma _{K_s :K} ,\gamma _{1:K} ,\lambda _2 } \right)} \right]^{\left( {K_s  - 2} \right)} } . 
\end{eqnarray} \normalsize

Having an expression from the MGF as given in (\ref{eq:APP_joint_MGF_GSC_3}), we are now in the position to derive the 3-dimensional joint PDF of $Z_1  = \gamma _{1:K}$, $Z_2  = \sum\limits_{n = 2}^{K_s  - 1} {\gamma _{n:K} }$ and $Z_3  = \gamma _{K_s :K}$. Letting $\lambda _1  =  - S_1$, $\lambda _2  =  - S_2$ and $\lambda _3  =  - S_3$,  we can derive the 3-dimensional joint PDF of $Z_1  = \gamma _{1:K}$, $Z_2  = \sum\limits_{n = 2}^{K_s  - 1} {\gamma _{n:K} }$ and $Z_3  = \gamma _{K_s :K}$ by applying an inverse Laplace transform yielding
\small \begin{eqnarray}  \label{eq:APP_joint_PDF_GSC_3}
\!\!\!\!\!\!\!\!\!\!\!\! && \!\!\!\!\!\!\!\! p_Z \left( {z_1 ,z_2 ,z_3 } \right) = \mathcal{L}_{S_1 ,S_2 ,S_3 }^{ - 1} \left\{ {MGF_Z \left( { - S_1 , - S_2 , - S_3 } \right)} \right\} \nonumber
\\ 
\!\!\!\!\!\!\!\!\!\!\!\! &=& \!\!\!\! F\int\limits_0^\infty  {d\gamma _{K_s :K} p\left( {\gamma _{K_s :K} } \right)\mathcal{L}_{S_3 }^{ - 1} \left\{ {\exp \left( {-S_3 \gamma _{K_s :K} } \right)} \right\}\left[ {c\left( {\gamma _{K_s :K} } \right)} \right]^{\left( {K - K_s } \right)} }  \nonumber
\\ 
\!\!\!\!\!\!\!\!\!\!\!\! && \!\!\!\!  \times \!\!\!\! \int\limits_{\gamma _{K_s :K} }^\infty  {d\gamma _{1:K} p\left( {\gamma _{1:K} } \right)\mathcal{L}_{S_1 }^{ - 1} \left\{ {\exp \left( { - S_1 \gamma _{1:K} } \right)} \right\}\frac{1}{{\left( {K_s  - 2} \right)!}}\mathcal{L}_{S_2 }^{ - 1} \left\{ {\left[ {\mu \left( {\gamma _{K_s :K} ,\gamma _{1:K} , - S_2 } \right)} \right]^{\left( {K_s  - 2} \right)} } \right\}}  \nonumber
\\ 
\!\!\!\!\!\!\!\!\!\!\!\! &=& \!\!\!\! \frac{F}{{\left( {K_s  - 2} \right)!}}\int\limits_0^\infty  {d\gamma _{K_s :K} p\left( {\gamma _{K_s :K} } \right)\delta \left( {z_3  - \gamma _{K_s :K} } \right)\left[ {c\left( {\gamma _{K_s :K} } \right)} \right]^{\left( {K - K_s } \right)} }  \nonumber
\\ 
\!\!\!\!\!\!\!\!\!\!\!\! && \!\!\!\!  \times \!\!\!\! \int\limits_{\gamma _{K_s :K} }^\infty  {d\gamma _{1:K} p\left( {\gamma _{1:K} } \right)\delta \left( {z_1  - \gamma _{1:K} } \right)\frac{1}{{\left( {K_s  - 2} \right)!}}\mathcal{L}_{S_2 }^{ - 1} \left\{ {\left[ {\mu \left( {\gamma _{K_s :K} ,\gamma _{1:K} , - S_2 } \right)} \right]^{\left( {K_s  - 2} \right)} } \right\}}  \nonumber
\\ 
\!\!\!\!\!\!\!\!\!\!\!\! &=& \!\!\!\! \frac{F}{{\left( {K_s  - 2} \right)!}}p\left( {z_1 } \right)p\left( {z_3 } \right)\left[ {c\left( {z_3 } \right)} \right]^{\left( {K - K_s } \right)} U\left( {z_1  - z_3 } \right)\mathcal{L}_{S_2 }^{ - 1} \left\{ {\left[ {\mu \left( {z_3 ,z_1 , - S_2 } \right)} \right]^{\left( {K_s  - 2} \right)} } \right\} .
\end{eqnarray} \normalsize

With these 3-dimensional joint PDF, letting $X=Z_1$ and $Y=Z_2+Z_3$, we can obtain the 2-dimensional joint PDF of $Z^{'}=[X,Y]$ by integrating over $z_2$ yielding
\begin{equation} \label{eq:AP_final_3}
\small p_{Z^{'}} \left( {x,y} \right) = \int_{\left( {\frac{{K_s  - 2}}{{K_s  - 1}}} \right)y}^{\left( {K_s  - 2} \right)x} {p_Z \left( {x,z_2 ,y - z_2 } \right)dz_2 }.
\end{equation}

We now consider the case iii) for which $m=K_s-1$. Let $Z_1  = \sum\limits_{n = 1}^{K_s  - 2} {\gamma _{n:K} }$, $Z_2  = \gamma _{K_s  - 1:K}$ and $Z_3  = \gamma _{K_s :K}$. The 3-dimensional MGF of $Z=\left[Z_1,Z_2,Z_3\right]$ is given by
\small \begin{eqnarray} \label{eq:APP_joint_MGF_GSC_4_integralform}
\!\!\!\!\!\!\!\!\!\!\!\! && \!\!\!\!\!\!\!\! MGF_Z \left( {\lambda _1 ,\lambda _2 ,\lambda _3 } \right) = E\left\{ {\exp \left( {\lambda _1 Z_1  + \lambda _2 Z_2  + \lambda _3 Z_3 } \right)} \right\} \nonumber
\\ 
\!\!\!\!\!\!\!\!\!\!\!\! &=& \!\!\!\!  F\int\limits_0^\infty  {d\gamma _{1:K} p\left( {\gamma _{1:K} } \right)\exp \left( {\lambda _1 \gamma _{1:K} } \right) \cdots \!\!\!\! \int\limits_0^{\gamma _{K_s  - 3:K} } {d\gamma _{K_s  - 2:K} p\left( {\gamma _{K_s  - 2:K} } \right)\exp \left( {\lambda _1 \gamma _{K_s  - 2:K} } \right)} }  \nonumber
\\ 
\!\!\!\!\!\!\!\!\!\!\!\! && \!\!\!\!  \times \!\!\!\!\! \int\limits_0^{\gamma _{K_s  - 2:K} } \!\! {d\gamma _{K_s  - 1:K} p\left( {\gamma _{K_s  - 1:K} } \right)\exp \left( {\lambda _2 \gamma _{K_s  - 1:K} } \right)} \!\!\!\!\!\! \int\limits_0^{\gamma _{K_s  - 1:K} } \!\! {d\gamma _{K_s :K} p\left( {\gamma _{K_s :K} } \right)\exp \left( {\lambda _3 \gamma _{K_s :K} } \right)\left[ {c\left( {\gamma _{K_s :K} } \right)} \right]^{\left( {K - K_s } \right)} }.\ 
\end{eqnarray} \normalsize
With the help of (\ref{eq:Integral_solution}) and (\ref{eq:EDF_MGF_multiple}), we can easily obtain the 3-dimensional MGF of $Z_1  = \sum\limits_{n = 1}^{K_s  - 2} {\gamma _{n:K} }$, $Z_2  = \gamma _{K_s  - 1:K}$ and $Z_3  = \gamma _{K_s :K}$ as
\small \begin{eqnarray} \label{eq:APP_joint_MGF_GSC_4}
\!\!\!\!\!\!\!\!\!\!\!\! MGF_Z \left( {\lambda _1 ,\lambda _2 ,\lambda _3 } \right) \!\! &=& \!\!\!\! F\int\limits_0^\infty  {d\gamma _{K_s :K} p\left( {\gamma _{K_s :K} } \right)\exp \left( {\lambda _3 \gamma _{K_s :K} } \right)\left[ {c\left( {\gamma _{K_s :K} } \right)} \right]^{\left( {K - K_s } \right)} }  \nonumber
\\ 
\!\!\!\!\!\!\!\!\!\!\!\! && \!\!\!\!  \times \!\!\!\! \int\limits_{\gamma _{K_s :K} }^\infty  {d\gamma _{K_s  - 1:K} p\left( {\gamma _{K_s  - 1:K} } \right)\exp \left( {\lambda _2 \gamma _{K_s  - 1:K} } \right)\frac{1}{{\left( {K_s  - 2} \right)!}}\left[ {e\left( {\gamma _{K_s  - 1:K} ,\lambda _1 } \right)} \right]^{\left( {K_s  - 2} \right)} }.
\end{eqnarray} \normalsize

Having a MGF expression as given in (\ref{eq:APP_joint_MGF_GSC_4}), we are now in the position to derive the 3-dimensional joint PDF of $Z_1  = \sum\limits_{n = 1}^{K_s  - 2} {\gamma _{n:K} }$, $Z_2  = \gamma _{K_s  - 1:K}$ and $Z_3  = \gamma _{K_s :K}$. Letting $\lambda _1  =  - S_1$, $\lambda _2  =  - S_2$ and $\lambda _3  =  - S_3$  we can derive the 3-dimensional joint PDF of $Z_1  = \sum\limits_{n = 1}^{K_s  - 2} {\gamma _{n:K} }$, $Z_2  = \gamma _{K_s  - 1:K}$ and $Z_3  = \gamma _{K_s :K}$ by applying an inverse Laplace transform giving
\small \begin{eqnarray}  \label{eq:APP_joint_PDF_GSC_4}
\!\!\!\!\!\!\!\!\!\!\!\! && \!\!\!\!\!\!\!\! p_Z \left( {z_1 ,z_2 ,z_3 } \right) = \mathcal{L}_{S_1 ,S_2 ,S_3 }^{ - 1} \left\{ {MGF_Z \left( { - S_1 , - S_2 , - S_3 } \right)} \right\} \nonumber
\\ 
\!\!\!\!\!\!\!\!\!\!\!\! &=& \!\!\!\! \frac{F}{{\left( {K_s  - 2} \right)!}}\int\limits_0^\infty  {d\gamma _{K_s :K} p\left( {\gamma _{K_s :K} } \right)\mathcal{L}_{S_3 }^{ - 1} \left\{ {\exp \left( { - S_3 \gamma _{K_s :K} } \right)} \right\}\left[ {c\left( {\gamma _{K_s :K} } \right)} \right]^{\left( {K - K_s } \right)} }  \nonumber
\\ 
\!\!\!\!\!\!\!\!\!\!\!\! && \!\!\!\!  \times \!\!\!\! \int\limits_{\gamma _{K_s :K} }^\infty  {d\gamma _{K_s  - 1:K} p\left( {\gamma _{K_s  - 1:K} } \right)\mathcal{L}_{S_2 }^{ - 1} \left\{ {\exp \left( { - S_2 \gamma _{K_s  - 1:K} } \right)} \right\}\mathcal{L}_{S_1 }^{ - 1} \left\{ {\left[ {e\left( {\gamma _{K_s  - 1:K} , - S_1 } \right)} \right]^{\left( {K_s  - 2} \right)} } \right\}}  \nonumber
\\ 
\!\!\!\!\!\!\!\!\!\!\!\! &=& \!\!\!\! \frac{F}{{\left( {K_s  - 2} \right)!}}p\left( {z_2 } \right)p\left( {z_3 } \right)\left[ {c\left( {z_3 } \right)} \right]^{\left( {K - K_s } \right)} U\left( {z_2  - z_3 } \right)\mathcal{L}_{S_1 }^{ - 1} \left\{ {\left[ {e\left( {z_2 , - S_1 } \right)} \right]^{\left( {K_s  - 2} \right)} } \right\}.
\end{eqnarray} \normalsize

With these 3-dimensional joint PDF, letting $X=Z_2$ and $Y=Z_1+Z_3$ we can obtain the 2-dimensional joint PDF of $Z^{'}=[X,Y]$ by integrating over $z_3$ yielding
\begin{equation} \label{eq:AP_final_4}
\small p_{Z^{'}} \left( {x,y} \right) = \int_0^x {p_Z \left( {y - z_3 ,x,z_3 } \right)dz_3 } ,
\end{equation}
or equivalently we can obtain the 2-dimensional joint PDF of $Z^{'}=[X,Y]$ by integrating over $z_1$ yielding
\begin{equation} \label{eq:AP_final_5}
\small p_{Z^{'}} \left( {x,y} \right) = \int_{\left( {K_s  - 2} \right)x}^y {p_Z \left( {z_1 ,x,y - z_1 } \right)dz_1 } .
\end{equation}

Finally, we now consider the case iv) for which  $m=K_s$. Let $Z_1  = \gamma _{K_s :K}$ and $Z_2  = \sum\limits_{n = 1}^{K_s  - 1} {\gamma _{n:K} }$. For this case, the 2-dimensional MGF of $Z=\left[Z_1,Z_2\right]$ is given by the expectation
\small \begin{eqnarray} \label{eq:APP_joint_MGF_GSC_5_integralform}
\!\!\!\!\!\!\!\!\!\!\!\! && \!\!\!\!\!\!\!\! MGF_Z \left( {\lambda _1 ,\lambda _2 } \right) = E\left\{ {\exp \left( {\lambda _1 Z_1  + \lambda _2 Z_2 } \right)} \right\}  \nonumber
\\ 
\!\!\!\!\!\!\!\!\!\!\!\! &=& \!\!\!\! F\int\limits_0^\infty  {d\gamma _{1:K} p\left( {\gamma _{1:K} } \right)\exp \left( {\lambda _2 \gamma _{1:K} } \right) \cdots \!\!\!\! \int\limits_0^{\gamma _{K_s  - 2:K} } {d\gamma _{K_s  - 1:K} p\left( {\gamma _{K_s  - 1:K} } \right)\exp \left( {\lambda _2 \gamma _{K_s  - 1:K} } \right)} }  \nonumber
\\ 
\!\!\!\!\!\!\!\!\!\!\!\! && \!\!\!\!  \times \!\!\!\! \int\limits_0^{\gamma _{K_s  - 1:K} } {d\gamma _{K_s :K} p\left( {\gamma _{K_s :K} } \right)\exp \left( {\lambda _1 \gamma _{K_s :K} } \right)\left[ {c\left( {\gamma _{K_s :K} } \right)} \right]^{\left( {K - K_s } \right)} } . 
\end{eqnarray} \normalsize
With the help of (\ref{eq:Integral_solution}) and (\ref{eq:EDF_MGF_multiple}), we can easily obtain the 2-dimensional MGF of $Z_1  = \gamma _{K_s :K}$ and $Z_2  = \sum\limits_{n = 1}^{K_s  - 1} {\gamma _{n:K} }$ as
\begin{equation} \label{eq:APP_joint_MGF_GSC_5}\small 
 MGF_Z \left( {\lambda _1 ,\lambda _2 } \right) \!\! = \!\! \frac{F}{{\left( {K_s  - 1} \right)!}}\int\limits_0^\infty  {d\gamma _{K_s :K} p\left( {\gamma _{K_s :K} } \right)\exp \left( {\lambda _1 \gamma _{K_s :K} } \right)\left[ {c\left( {\gamma _{K_s :K} } \right)} \right]^{\left( {K - K_s } \right)} }  \left[ {e\left( {\gamma _{K_s :K} ,\lambda _2 } \right)} \right]^{\left( {K_s  - 1} \right)}.
\end{equation} \normalsize

Having an MGF expression as given in (\ref{eq:APP_joint_MGF_GSC_5}), we are now in the position to derive the 3-dimensional joint PDF of $Z_1  = \gamma _{K_s :K}$ and $Z_2  = \sum\limits_{n = 1}^{K_s  - 1} {\gamma _{n:K} }$. Letting $\lambda _1  =  - S_1$ and $\lambda _2  =  - S_2$  we can derive the 2-dimensional joint PDF of $Z_1  = \gamma _{K_s :K}$ and $Z_2  = \sum\limits_{n = 1}^{K_s  - 1} {\gamma _{n:K} }$ by applying an inverse Laplace transform yielding
\small \begin{eqnarray}  \label{eq:APP_joint_PDF_GSC_5}
\!\!\!\!\!\!\!\!\!\!\!\! && \!\!\!\! p_Z \left( {z_1 ,z_2 } \right) = \mathcal{L}_{S_1 ,S_2 }^{ - 1} \left\{ {MGF_Z \left( { - S_1 , - S_2 } \right)} \right\} \nonumber
\\ 
\!\!\!\!\!\!\!\!\!\!\!\! &=& \!\!\!\! \frac{F}{{\left( {K_s  - 1} \right)!}}\int\limits_0^\infty  {d\gamma _{K_s :K} p\left( {\gamma _{K_s :K} } \right)\mathcal{L}_{S_1 }^{ - 1} \left\{ {\exp \left( { - S_1 \gamma _{K_s :K} } \right)} \right\}\left[ {c\left( {\gamma _{K_s :K} } \right)} \right]^{\left( {K - K_s } \right)} }  \nonumber
\\ 
\!\!\!\!\!\!\!\!\!\!\!\! && \!\!\!\!  \times \mathcal{L}_{S_2 }^{ - 1} \left\{ {\left[ {e\left( {\gamma _{K_s :K} , - S_2 } \right)} \right]^{\left( {K_s  - 1} \right)} } \right\} \nonumber
\\ 
\!\!\!\!\!\!\!\!\!\!\!\! &=& \!\!\!\!\frac{F}{{\left( {K_s  - 1} \right)!}}p\left( {z_1 } \right)\left[ {c\left( {z_1 } \right)} \right]^{\left( {K - K_s } \right)} \mathcal{L}_{S_2 }^{ - 1} \left\{ {\left[ {e\left( {z_1 , - S_2 } \right)} \right]^{\left( {K_s  - 1} \right)} } \right\}. 
\end{eqnarray} \normalsize

\clearpage
\bibliographystyle{ieeetran}

\clearpage
\begin{figure}
\centering
\subfigure[Example of original $M$-dimensional groups\label{Example_1_1}]{\includegraphics[width=5.5in,trim=0.5cm 1.5cm 0.5cm 0cm]{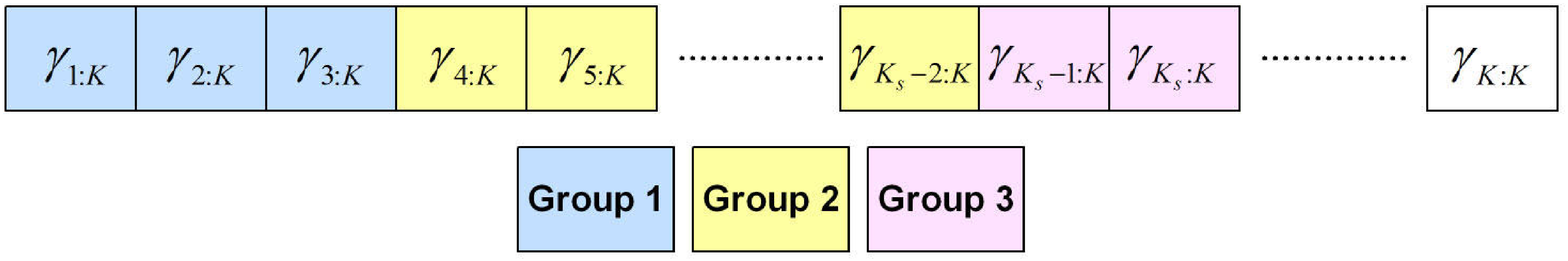}}\\
\subfigure[Example of substituted $(M+1)$-dimensional groups\label{Example_1_2}]{\includegraphics[width=5.5in,trim=0.5cm 1.5cm 0.5cm 0cm]{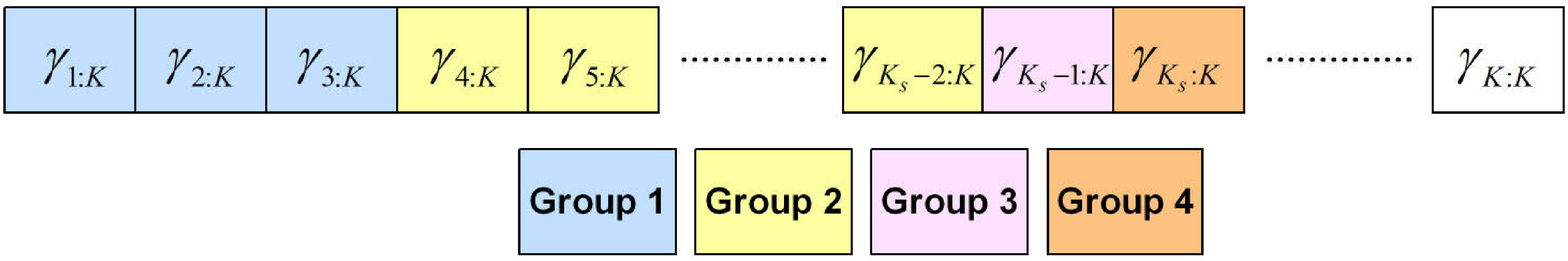}}
\caption{Examples for 3-dimensional joint PDF with non-split groups.}
\label{Example_1}
\end{figure}

\begin{figure}
\centering
\subfigure[Example of original $M$-dimensional groups\label{Example_a}]{\includegraphics[width=5.5in,trim=0.5cm 1.5cm 0.5cm 0cm]{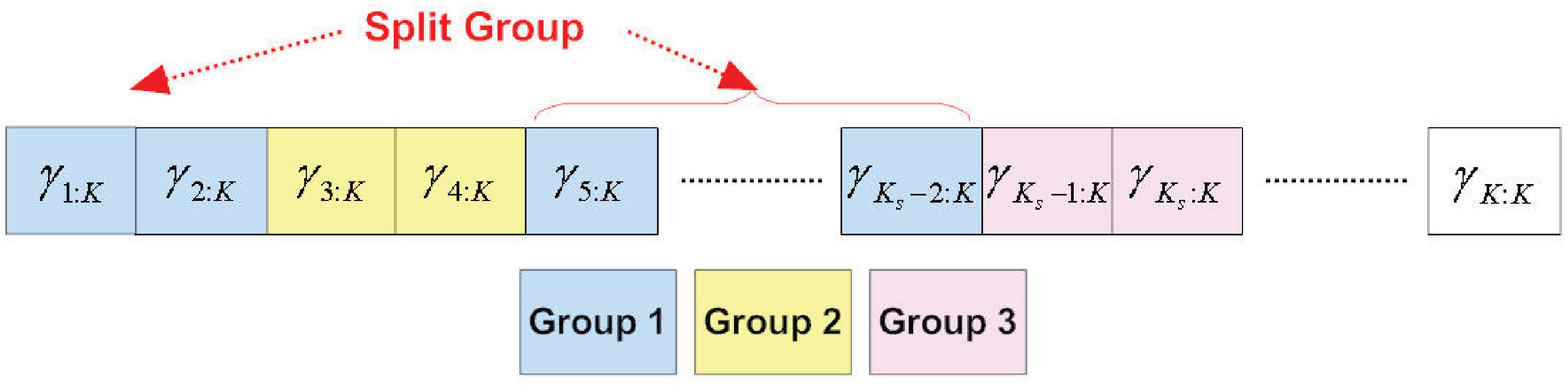}}\\
\subfigure[Example of substituted split groups\label{Example_b}]{\includegraphics[width=5.5in,trim=0.5cm 1.5cm 0.5cm 0cm]{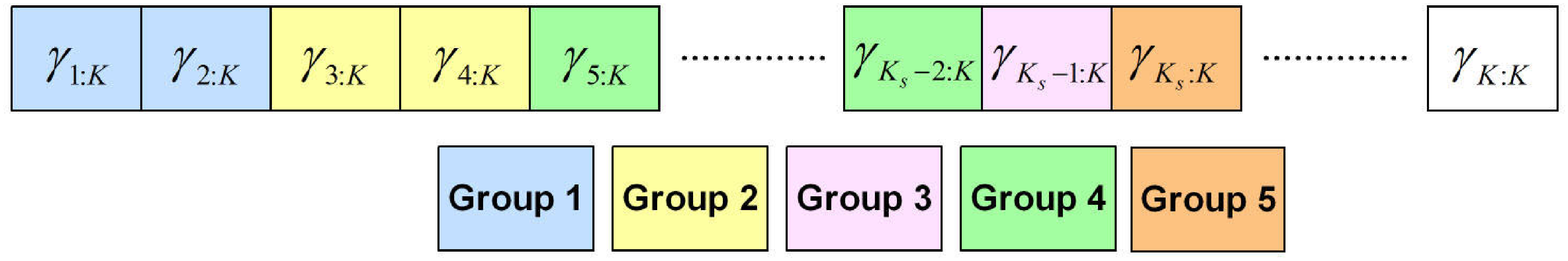}}
\caption{Examples for 3-dimensional joint PDF with split groups.}
\label{Example_2}
\end{figure}

\begin{figure}[h!]
\centering
\includegraphics[width=5.5in]{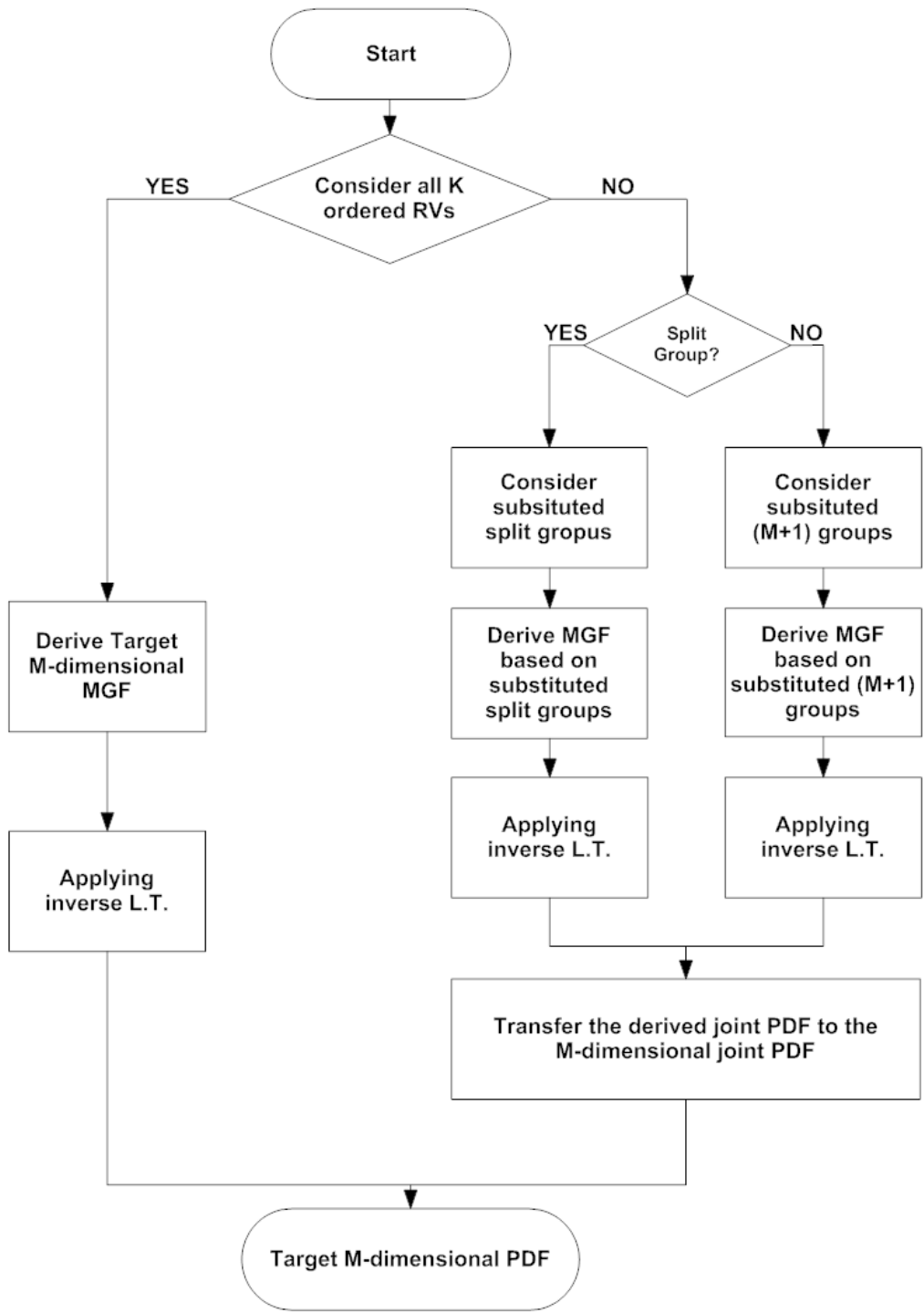}
\caption{Flow chart to obtain the desired $M$-dimensional joint PDF.}
\label{flowchart}
\end{figure}

\begin{figure}
\centering
\subfigure[For the case of $m=1$\label{Example_3_1}]{\includegraphics[width=5.5in,trim=0.5cm 1.5cm 0.5cm 0cm]{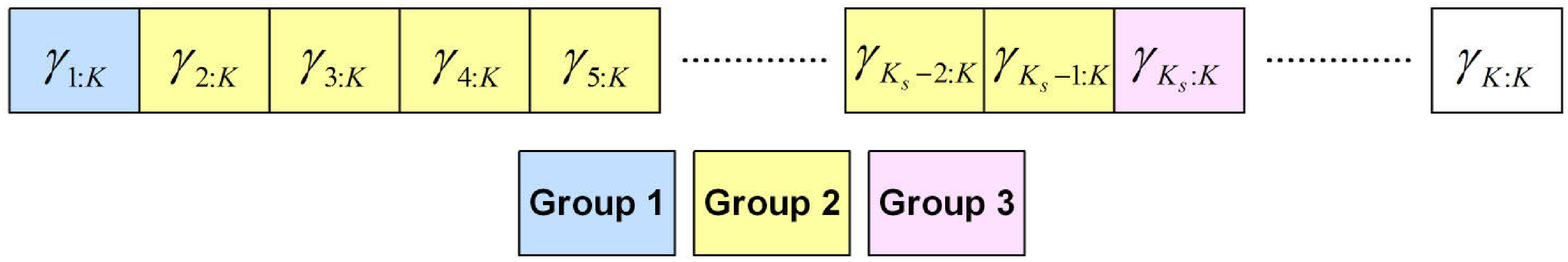}}\\
\subfigure[For the case of $1<m<K_s-1$\label{Example_3_2}]{\includegraphics[width=5.5in,trim=0.5cm 1.5cm 0.5cm 0cm]{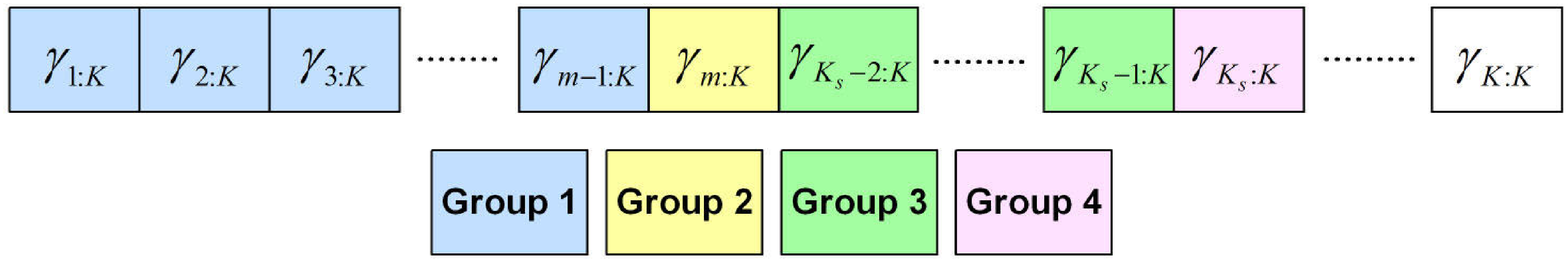}}\\
\subfigure[For the case of $m=K_s-1$\label{Example_3_3}]{\includegraphics[width=5.5in,trim=0.5cm 1.5cm 0.5cm 0cm]{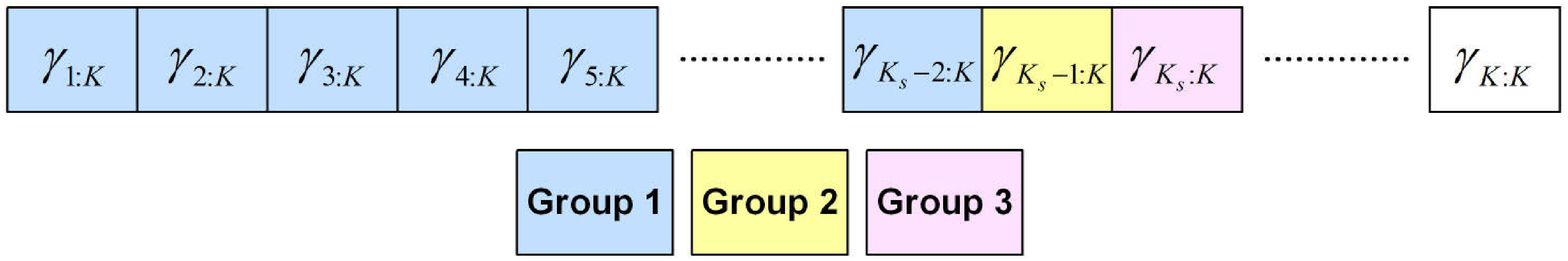}}\\
\subfigure[For the case of $m=K_s$\label{Example_3_4}]{\includegraphics[width=5.5in,trim=0.5cm 1.5cm 0.5cm 0cm]{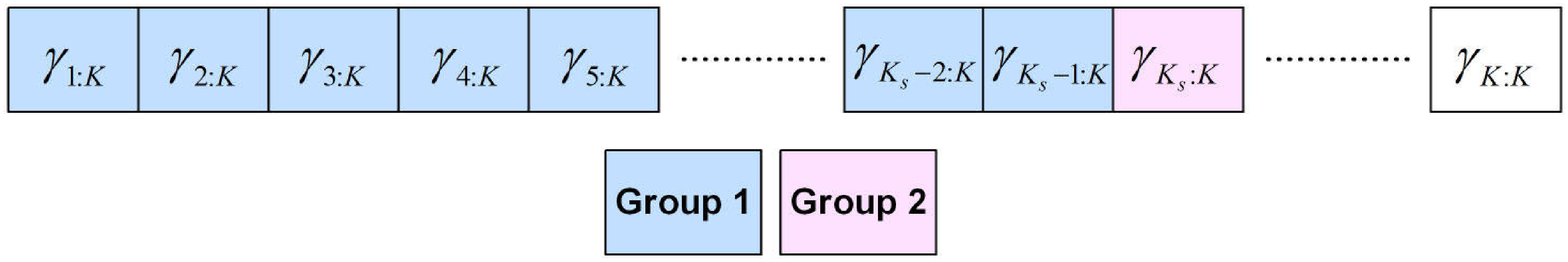}}
\caption{Joint MGF of $\gamma _{m:K}$ and $\sum\limits_{\scriptstyle n = 1 \hfill \atop \scriptstyle n \ne m \hfill}^{K_s } {\gamma _{n:K} }$.}
\label{Example_3}
\end{figure}

\clearpage
\begin{biographynophoto}{Sung Sik Nam} (S'05, M'09) was born in Korea. He received the B.S. and M.S. degrees in Electronic Engineering from Hanyang University, Korea, in 1998 and 2000, respectively. Also He received the M.S. degree in Electrical Engineering from University of Southern California (USC), Los Angeles, CA, USA, in 2003, and the Ph.D. degree at Texas A\&M University (TAMU), College Station, TX, USA, in 2009. From 1998 to 1999, he worked as a researcher at The Electronics \& Telecommunication Research Institute (ETRI), Daejeon, Korea. From 2003 through 2004, he worked as a manager at the Korea Telecom Corporation (KT), New Business Planning Office, New Business Planning Team, Korea. Since June 2009, he has been with the Department of Electronic Engineering at Hanyang University, Korea. His research interests include the design and performance analysis of wireless communications, diversity techniques, power control, multiuser scheduling.
\end{biographynophoto}
\begin{biographynophoto}{Mohamed-Slim Alouini} (S'94, M'98, SM'03, F'09) was
born in Tunis, Tunisia. He received the Ph.D. degree in electrical engineering
from the California Institute of Technology (Caltech), Pasadena,
CA, USA, in 1998. He was with the department of Electrical and Computer Engineering of the University of Minnesota,Minneapolis, MN, USA, then with the Electrical and Computer Engineering Program at the Texas A\&M University at Qatar, Education City, Doha, Qatar. Since June 2009, he has been a Professor of Electrical Engineering in the Division of
Physical Sciences and Engineering at KAUST, Saudi Arabia., where his current research interests include the design and performance analysis of wireless communication systems.
\end{biographynophoto}
\begin{biographynophoto}{Hong-Chuan Yang} (S'00, M'03, SM'07) was born in Liaoning, China.
He received the B.E. degree from the
Changchun Institute of Posts \& Telecomm. (now part of Jilin University),
Changchun, China, in 1995. During his graduate study at
the University of Minnesota, USA, Dr. Yang received an
M.Sc. degree in Applied and Computational Mathematics in 2000,
an M.Sc. degree in Electrical Engineering in 2001, and a Ph.D. degree
in Electrical Engineering in 2003.
Dr. Yang has been with the Department of Electrical and Computer
Engineering Department of the University of Victoria, Victoria, B.C., Canada since September 2003.
He became an associate professor in July 2007.
From 1995 to 1998, Dr. Yang was a Research Associate at the
Science and Technology Information Center (STIC) of
Ministry of Posts \& Telecomm. (MPT), Beijing, China.
His research work focuses on different aspects of
wireless communications, with special emphasis on channel modeling,
diversity techniques, system performance evaluation, cross-layer design
and energy efficient communications.
Dr. Yang is a recipient of the Doctoral Dissertation Fellowship (DDF)
Award from the graduate school of the University of Minnesota
for the 2002-2003 academic year. He is serving as an editor for {\it IEEE Transactions on Wireless Communications} and {\it Journal on Communications and Networking}. Dr. Yang is a registered professional engineer in B. C., Canada.
\end{biographynophoto}
\end{document}